\documentclass[orivec]{llncs}
\pagestyle{plain} 
\usepackage{amsmath}
\usepackage{amssymb}
\usepackage{amsfonts}
\usepackage[misc,geometry]{ifsym} 
\usepackage{graphicx}
\DeclareGraphicsExtensions{.pdf,.png,.jpg}
\usepackage{float} 
\usepackage{tikz}
\usetikzlibrary{petri}
\usetikzlibrary{decorations,decorations.pathreplacing,decorations.pathmorphing,decorations.markings}
\usetikzlibrary{arrows}
\usetikzlibrary{shapes.geometric}
\usetikzlibrary{shapes.symbols}
\usepackage{tkz-graph}
\usepackage[T1]{fontenc}
\usepackage{mathbbol}
\usepackage[ruled,vlined,linesnumbered]{algorithm2e}
\DeclareSymbolFontAlphabet{\mathbbl}{bbold}
\usepackage{mathtools}

\parskip0ex

\frenchspacing 


\begin{document}

\renewenvironment{proof}{{\em Proof.}}{\qed} 

\newcommand{\cstate}[1]{\ensuremath{{}_{{\color{black!70}\lfloor}#1}}} 
\newcommand{\mysyst}{\mathcal{S}} 
\newcommand{\lts}{LTS}

\newcommand{\es}{\emptyset}
\newcommand{\pminus}{
\mbox{\textrm{$-$}\!\!\!\!\!\!\!\:\,\,\raisebox{1.5mm}{$\scriptstyle \bullet$}}\:}
\newcommand{\dt}{\bullet}
\newcommand{\nsymbol}{\mathbb{N}}
\newcommand{\zsymbol}{\mathbb{Z}}
\newcommand{\qsymbol}{\mathbb{Q}}
\newcommand{\rsymbol}{\mathbb{R}}
\newcommand{\support}{\mathit{supp}}
\newcommand{\emptyseq}{\varepsilon}
\newcommand{\Parikh}{{\mathbf P}}

\def\projection#1#2{\mathchoice
              {\setbox1\hbox{${\displaystyle #1}_{\scriptstyle #2}$}
              \projectionaux{#1}{#2}}
              {\setbox1\hbox{${\textstyle #1}_{\scriptstyle #2}$}
              \projectionaux{#1}{#2}}
              {\setbox1\hbox{${\scriptstyle #1}_{\scriptscriptstyle #2}$}
              \projectionaux{#1}{#2}}
              {\setbox1\hbox{${\scriptscriptstyle #1}_{\scriptscriptstyle #2}$}
              \projectionaux{#1}{#2}}}
\def\projectionaux#1#2{{#1\,\smash{\vrule height .8\ht1 depth .85\dp1}}_{\,#2}} 

\newcommand{\is}{\iota}
\newcommand{\uniqueP}{\Upsilon}
\newcommand{\zero}{\mathbbl{0}}
\newcommand{\zeroT}{\mathbbl{0}^{\mid T \mid}} 
\newcommand{\oneP}{\mathbbl{1}^{\mid P \mid}}

\newcommand{\TS}{\mathit{TS}}

\itemsep0pt

\tikzstyle{mypetristyle}=[
      place/.style={circle, draw=black, fill=white, thick, minimum size=3mm},
      transition/.style={rectangle, draw=black, fill=black!8, thick, minimum height=2mm, minimum width=2mm,inner sep=2pt},
      token/.style={circle,draw=black,fill=black,inner sep=0pt,minimum size=1mm}
]

\tikzset{every picture/.style={mypetristyle}}
\tikzset{every label/.style={black!90}}

\title{Efficient Synthesis of Weighted Marked Graphs with Circular Reachability Graph, and Beyond} 
\author{Raymond Devillers\inst{1} \and Evgeny Erofeev(\Letter)\thanks{Supported by DFG through \mbox{grant Be 1267/16-1} {\tt ASYST}.}\inst{2} \and Thomas Hujsa\thanks{Supported by the STAE foundation/project DAEDALUS, Toulouse, France.}\inst{3}}
\institute{
D\'epartement d'Informatique, Universit\'e Libre de Bruxelles,\\ B-1050 Brussels, Belgium
(\email{rdevil@ulb.ac.be})
\and
Department of Computing Science, Carl von Ossietzky Universit\"at Oldenburg,\\ D-26111 Oldenburg, Germany 
(\email{evgeny.erofeev@informatik.uni-oldenburg.de})
\and
LAAS-CNRS, Universit\'e de Toulouse, CNRS, Toulouse, France 
(\email{thujsa@laas.fr})
}

\maketitle

\begin{abstract} 
In previous studies, several methods have been developed to synthesise Petri nets from labelled transition systems (\lts{}), 
often with structural constraints on the net and on the \lts{}.
In this paper, we focus on Weighted Marked Graphs (WMGs) and Choice-Free (CF) Petri nets,
two weighted subclasses of nets in which each place has at most one output;
WMGs have the additional constraint that each place has at most one input.

We provide new conditions for checking the existence of a WMG whose reachability graph is isomorphic to a given circular \lts{}, 
i.e. forming a single cycle;
we develop two new polynomial-time synthesis algorithms dedicated to these constraints:
the first one is LTS-based (classical synthesis) while the second one is vector-based (weak synthesis) and more efficient in general.
We show that our conditions also apply to CF synthesis in the case of three-letter alphabets,
and we discuss the difficulties in extending them to CF synthesis over arbitrary alphabets.

\end{abstract}

\keywords{
Weighted Petri net, weighted marked graph, choice-free net, synthesis, weak synthesis, labelled transition system, cycle, cyclic word, circular solvability, polynomial-time algorithm, P-vector, T-vector, Parikh vector. 
}

\section{Introduction}\label{intro.sec}

Petri nets form a highly expressive and intuitive operational model of discrete event systems,
capturing the mechanisms of synchronisation, conflict and concurrency. 
Many of their fundamental behavioural properties are decidable, 
allowing to model and analyse numerous artificial and natural systems.
However,
most interesting model checking problems are worst-case intractable, 
and the efficiency of synthesis algorithms varies widely depending 
on the constraints imposed on the desired solution.
In this study,
we focus on the Petri net synthesis problem from a labelled transition system (\lts{}),
which consists in determining the existence of a Petri net whose reachability graph is isomorphic to the given \lts{},
and building such a Petri net solution when it exists.

In previous studies on analysis or synthesis, structural restrictions on nets 
encompassed \emph{plain} nets (each weight equals $1$; also called ordinary nets) \cite{murata89},
\emph{homogeneous} nets (for each place $p$, all the output weights of $p$ are equal) \cite{STECS,HD2017},
\emph{free-choice} nets (the net is plain, and any two transitions sharing an input have the same set of inputs) \cite{DesEsp,STECS},  
{\em join-free} nets (each transition has at most one input place) \cite{STECS,ACSD13,TECS14,HD2017}.
Recently, another kind of restriction has been considered, limiting the number of distinct labels of the \lts{} 
\cite{BarylskaBEMP15,BarylskaBEMP16,ErofeevBMP16,ErofeevW17}.

Depending on the constraints on the solution to be constructed, the complexity of the synthesis problem can vary widely:
the problem can be solved in polynomial-time for bounded Petri nets~\cite{PolyTimeSynthesis95},
while aiming at \emph{elementary net systems}, or at various other Petri net subclasses with fixed marking bound, 
makes the problem NP-complete~\cite{EnsNPcomplete97,NPcompleteSynthesisClasses2019}.

In this paper, we study the solvability of \lts{} with weighted marked graphs (WMGs; each place has at most one output and one input) 
and choice-free nets (CF; each place has at most one output). 
Both classes are important for real-world applications, and are widely studied in the 
literature~\cite{tcs97,PN14,DEH18,chep,WTS92,TCS17,DH2018,BestDSW18}.
We focus mainly on finite \emph{circular \lts{}}, 
i.e. strongly connected \lts{} that contain a unique \emph{cycle}\footnote{A set $A$ of $k$ arcs in a \lts{} $G$ defines a cycle of $G$ 
if the elements of $A$ can be ordered as a sequence $a_1 \ldots a_k$
such that, for each $i \in \{1, \ldots, k\}$, $a_i = (n_i,\ell_i,n_{i+1})$ and
$n_{k+1} = n_1$, i.e. the $i$-th arc $a_i$ goes from node $n_i$ to node $n_{i+1}$ until the first node $n_1$ is reached, closing the path. 
}.
In this context, we investigate the \emph{cyclic solvability} of a word $w$, 
meaning the existence of a Petri net solution to the finite circular \lts{} induced by the infinite \emph{cyclic word} $w^\infty$.
These restrictions appear in practical situations,
since various complex applications can be decomposed into subsystems satisfying such constraints 
\cite{phdhujsa,BarylskaBEMP15,ErofeevBMP16,besdev-ccc25,TCS17,DS18,RD-acta18,articul19}.\\

\noindent {\bf Contributions.} 
We study further the links between simple \lts{} structures and the reachability graph of WMGs and CF nets, as follows. 
First, we show that a binary (i.e. over a two-letter alphabet) \lts{} is CF-solvable if and only if it is WMG-solvable. 
Then, we develop new conditions for WMG-solving a cyclic word over an arbitrary alphabet, 
with a polynomial-time synthesis algorithm. 

We show that a word over a three-letter alphabet is cyclically WMG-solvable iff it is cyclically CF-solvable,
and that this result does not hold with four-letter alphabets.
More generally, we discuss the difficulties of extending these results to CF synthesis over arbitrary alphabets.

We introduce the notion of {\em weak synthesis}, which aims at synthesising a Petri net from a given transition-vector $\uniqueP$ 
instead of a sequence: 
the solution obtained enables some sequence whose Parikh vector equals $\uniqueP$.
This allows to be less restrictive on the solution design.
Then, we provide a polynomial-time algorithm for the weak synthesis of WMGs with circular reachability graphs.

Finally, we show that our weak synthesis algorithm
performs generally much faster than the sequence-based algorithm.
 
Comparing with \cite{DEH-ATAED19}, we provide more details,
we add the equivalence result on CF nets for three-letter alphabets
in Subsection~\ref{nocf.subsec} and the new Section~\ref{WeakSynthesis} on weak synthesis,
with a new synthesis algorithm and the study of its complexity.\\

\noindent {\bf Organisation of the paper.} 
After recalling classical definitions, notations and properties in Section \ref{Def.sec},
we present the equivalence of CF- and WMG-solvability for $2$-letter words in Section \ref{twoletters.sec}.

In Section \ref{kletters.sec}, we focus on circular \lts{}:
we give a new characterisation of WMG-solvability and a dedicated polynomial-time synthesis algorithm.  
We prove the equivalence between cyclic WMG and CF synthesis for three-letter alphabets.
We also provide a number of examples showing that some of our results 
cannot be applied to the class of CF-nets over arbitrary alphabets.

Section~\ref{WeakSynthesis} contains our study of the weak synthesis problem for WMGs with a circular reachability graph,
with a new polynomial-time synthesis algorithm.
Finally, Section \ref{conclu.sec} presents our conclusions and perspectives.

\section{Classical Definitions, Notations and Properties}\label{Def.sec}

{\bf \lts{}, sequences and reachability.} 
A {\em labelled transition system with initial state}, {\em \lts{}} for short, 
is a quadruple $\TS=(S,\to,T,\is)$ where $S$ is the set of {\em states}, 
$T$ is the (finite) set of {\em labels}, 
$\to\,\subseteq(S\times T\times S)$ is the {\em transition relation}, 
and 
$\is\in S$ is the {\em initial state}.
A label $t$ is {\em enabled} at $s\in S$, written $s[t\rangle$, if $\exists s'\in S\colon(s,t,s')\in\to$, 
in which case $s'$ is said to be {\em reachable} from $s$ by the firing of $t$, and we write $s[t\rangle s'$.
Generalising to any (firing) sequences $\sigma\in T^*$,
$s[\emptyseq\rangle$ and $s[\emptyseq\rangle s$ are always true, with $\emptyseq$ being the empty sequence;
and $s[\sigma t\rangle s'$, i.e.,  $\sigma t$ is {\em enabled} from state $s$ and leads to $s'$ 
if there is some $s''$ 
with $s[\sigma\rangle s''$ and $s''[t\rangle s'$.
For clarity,
in case of long formulas we write $\cstate{r} \sigma \cstate{s} \tau \cstate{q}$ instead of $r [\sigma \rangle s [\tau \rangle q$, 
thus fixing some intermediate states along a firing sequence. 
A state $s'$ is {\em reachable} from state $s$ if $\exists\sigma\in T^*\colon s[\sigma\rangle s'$.
The set of states reachable from $s$ is noted $[s\rangle$.\\

\noindent {\bf Petri nets and reachability graphs.} 
A (finite, place-transition) \emph{weighted Petri net}, or \emph{weighted net},
is a tuple $N=(P,T,W)$ where
$P$ is a finite set of {\em places}, 
$T$ is a finite set of {\em transitions}, with $P\cap T=\es$
and
$W$ is a {\em weight} function $W\colon((P\times T)\cup(T\times P))\to\nsymbol$ giving the weight of each arc.
A \emph{Petri net system}, or \emph{system}, is a tuple $\mathcal S=(N,M_0)$ where 
$N$ is a net and $M_0$ is the {\em initial marking},
which is a mapping $M_0\colon P\to\nsymbol$ (hence a member of $\nsymbol^P$)
indicating the initial number of {\em tokens} in each place. 
The {\em incidence matrix} $I$ of the net is the integer $P\times T$-matrix with components $I(p,t)=W(t,p)-W(p,t)$.

A place $p \in P$ is {\em enabled by} a marking $M$ if $M(p) \ge W(p,t)$ for every transition $t\in T$. 
A transition $t\in T$ is {\em enabled by} a marking $M$, 
denoted by $M[t\rangle$, if for all places $p\in P$, $M(p)\geq W(p,t)$.
If $t$ is enabled at $M$, then $t$ can {\em occur} (or {\em fire}) in $M$, 
leading to the marking $M'$ defined by $M'(p)=M(p)-W(p,t)+W(t,p)$; we note $M[t\rangle M'$.
A marking $M'$ is {\em reachable} from $M$ if there is a sequence of firings leading from $M$ to $M'$.
The set of markings reachable from $M$ is denoted by $[M\rangle$.
The {\em reachability graph of $\mathcal S$} is the labelled transition system $\mathit{RG}(\mathcal S)$ 
with the set of vertices $[M_0\rangle$, the set of labels $T$, initial state $M_0$
and transitions $\{(M,t,M')\mid M,M'\in[M_0\rangle\land M[t\rangle M'\}$.
A system $\mathcal S$ is {\em bounded} if $\mathit{RG}(\mathcal S)$ is finite.\\

\noindent {\bf Vectors.} The {\em support} of a vector is the set of the indices of its non-null components.
Consider any net $N=(P,T,W)$ with its incidence matrix~$I$.
A {\em T-vector} (respectively {\em P-vector}) is an element of $\nsymbol^{T}$ (respectively $\nsymbol^{P}$); 
it is called {\em prime} if the greatest common divisor of its components is one 
(i.e., it is non-null and its components do not have a common non-unit factor). 
A {\em T-semiflow} $\nu$ of the net is a non-null T-vector 
such that $I\cdot\nu=\zero$. 
A T-semiflow is called {\em minimal} when it is prime and its support 
is not a proper superset of the support of any other T-semiflow~\cite{tcs97}.

The {\em Parikh vector} $\Parikh(\sigma)$ of a finite transition sequence $\sigma$ is a T-vector 
counting the number of occurrences of each transition in $\sigma$,
and the {\em support} of $\sigma$ is the support of its Parikh vector, 
i.e., $\support(\sigma)=\support(\Parikh(\sigma))=\{t\in T\mid\Parikh(\sigma)(t)>0\}$.\\

\noindent {\bf Strong connectedness and cycles in \lts{}.} 
The \lts{} is said {\em reversible} if, 
$\forall s\in[\is\rangle$, we have $\is\in[s\rangle$, i.e., it is always possible to go back to the initial state;
reversibility implies the strong connectedness of the \lts{}.

A sequence $s[\sigma\rangle s'$ is a {\em cycle}, or more precisely a {\em cycle at (or around) state $s$}, if $s=s'$.
A non-empty cycle $s[\sigma\rangle s$ is called {\em small} if there is no non-empty cycle 
$s'[\sigma'\rangle s'$ in $\TS$ with $\Parikh(\sigma') \lneqq \Parikh(\sigma)$ 
(the definition of Parikh vectors extends readily to sequences over the set of labels $T$ of the \lts{}). 
A cycle $s[\sigma\rangle s$ is {\em prime} if $\Parikh(\sigma)$ is prime. 
$\TS$ has the {\em prime cycle property} if each small cycle has a prime Parikh vector.

A \emph{circular \lts{}} is a finite, strongly connected \lts{} 
that contains a unique cycle; hence, it has the shape of an oriented circle.
The circular \lts{} \emph{induced by} a word $w\!=\!w_1\ldots w_k$ is defined as $s_0[w_1\rangle s_1 [w_2\rangle s_2 \ldots [w_k\rangle s_0$
with initial state $s_0$.\\
All notions defined for labelled transition systems apply to Petri nets through their reachability graphs.\\

\noindent {\bf Petri net subclasses.} A net $N=(P,T,W)$ is {\em plain} if no arc weight exceeds $1$;  
{\em pure} if $\forall p\in P\colon(p^\dt{\cap}{}^\dt p)=\es$, 
where $p^\dt=\{t\in T\mid W(p,t){>}0\}$ and ${}^\dt p=\{t\in T\mid W(t,p){>}0\}$;
{\it choice-free} (CF) \cite{crespi-mandrioli-75,tcs97} or place-output-nonbranching (ON) \cite{besdev-ccc25}
if $\forall p\in P\colon|p^\dt|\leq1$; 
a {\em weighted marked graph} (WMG) if $|p^\dt|\leq 1$ and $|{}^\dt p|\leq 1$ for all places $p\in P$.
The WMGs form a subclass of the CF nets and contain the weighted T-systems (WTSs) of~\cite{WTS92}, 
also known as {\em weighted event graphs} (WEGs) in \cite{March09}, in which $\forall p \in P$,
$|{}^\dt p| = 1$ and $|p^\dt| = 1$. 
Plain WEGs are also known as {\em marked graphs}~\cite{chep} or {\em T-nets}~\cite{DesEsp}.\\

\noindent {\bf Isomorphism and solvability.} 
Two \lts{} $\TS_1=(S_1,\to_1,T,s_{01})$ and $\TS_2=(S_2,\to_2,T,s_{02})$ are
isomorphic if there is a bijection $\zeta\colon S_1\to S_2$ with $\zeta(s_{01})=s_{02}$ and 
$(s,t,s')\in\to_1\,\Leftrightarrow(\zeta(s),t,\zeta(s'))\in\to_2$, for all $s,s'\in S_1$.
If an \lts{} $\TS$ is isomorphic to $\mathit{RG}(\mathcal S)$, where $\mathcal S$ is a system,
we say that $\mathcal S$ {\em solves} $\TS$.
Solving a word $w=\ell_1 \ldots \ell_k$ amounts to
solve the acyclic \lts{} defined by the single path $\is [\ell_1\rangle s_1 \ldots [\ell_k\rangle s_k$. 
A finite word $w$ is \emph{cyclically solvable} if the circular \lts{} induced by $w$ is solvable.
An \lts{} is WMG- (or CF-)solvable if a WMG (or a CF system) solves it.\\

\noindent {\bf Separation problems.}
Let $TS=(S,\to,T,s_0)$ be a given labelled transition system.
The theory of regions~\cite{bbd} characterises the solvability of an \lts{} through the solvability of a set of {\em separation problems}.
In case the \lts{} is finite, we have to solve
$\frac{1}{2}{\cdot}|S|{\cdot}(|S|{-}1)$ states separation problems 
and up to $|S|{\cdot}|T|$ event/state separation problems, as follows:
\begin{itemize}
\item A {\em region} of $(S,\to,T,s_0)$ is a triple 
$(\mathbb{R},\mathbb{B},\mathbb{F}) \in (S \to \mathbb{N}, T \to \mathbb{N}, T\to \mathbb{N})$ 
such that for all $s[t\rangle s'$, $\mathbb{R}(s) \ge \mathbb{B}(t)$ and 
$\mathbb{R}(s') = \mathbb{R}(s) - \mathbb{B}(t) + \mathbb{F}(t)$. 
A region models a place $p$, in the sense that $\mathbb{B}(t)$ models $W(p,t)$, $\mathbb{F}(t)$ models $W(t,p)$, 
and $\mathbb{R}(s)$ models the token count of $p$ at the marking corresponding to $s$.

\item A {\em states separation problem} (SSP for short) consists of a set of states $\{s,s'\}$ with $s\neq s'$,
and it can be solved by a place distinguishing them, 
i.e., has a different
number of tokens in the markings corresponding to the two states.
\item An {\em event/state separation problem} (ESSP for short) consists of a pair $(s,t)\in S{\times}T$ with $\lnot s[t\rangle$.
For every such problem, one needs a place $p$ such that $M(p)<W(p,t)$ 
for the marking $M$ corresponding to state $s$, where $W$ refers to the arcs of the hoped-for net.
On the other hand, for every edge $(s',t, s'') \in \to$ we must guarantee $M'(p)\ge W(p,t)$, $M'$ being
the marking corresponding to state $s'$.
\end{itemize}
If the \lts{} is infinite, also the number of separation problems (of each kind) becomes infinite.

A synthesis procedure does not necessarily lead to a connected solution.
However, the technique of decomposition into prime factors described in \cite{dev-ACSD16,RD-acta18}
can always be applied first, so as to handle connected partial solutions and recombine them afterwards. 
Hence, in the sequel, we focus on connected nets, w.l.o.g. 
In the next section, we consider the CF synthesis problem with two distinct labels.

\section{Reversible Binary CF Synthesis}\label{twoletters.sec}
 
In this section, we relate CF- to WMG-solvability for binary reversible \lts{}. 
\begin{lemma}[Pure CF-solvability]\label{pure-CF.lem}\\
If a reversible \lts{} has a CF solution, it has a pure CF solution. 
\end{lemma}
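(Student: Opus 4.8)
The plan is to take a CF solution $\mathcal{S} = (N, M_0)$ of the reversible \lts{} and show that any impure place can be replaced, or its behaviour reproduced, by a pure place, without changing the reachability graph up to isomorphism. Recall that a place $p$ is impure precisely when $W(p,t) > 0$ and $W(t,p) > 0$ for the \emph{same} transition $t$; since the net is choice-free, each place $p$ has at most one output transition, so there is at most one $t$ with $W(p,t) > 0$, which keeps the analysis local and tractable. The key observation I would exploit is reversibility: in a reversible system every reachable marking lies on a cycle back to $M_0$, and the firing of the T-semiflow corresponding to a full return to $M_0$ leaves every place unchanged. This constrains the token flow through each place very tightly.

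First I would fix an arbitrary impure place $p$ with its unique output transition $t$ (so $a := W(p,t) > 0$ and $b := W(t,p) > 0$), and consider the self-loop contribution of $t$ on $p$. The idea is that the self-loop part $\min(a,b)$ only serves as an enabling threshold test at $t$ and contributes nothing to the net token change; I would try to split $p$ into a pure place $p'$ carrying the net incidence $I(p,\cdot)$ together with a separate gadget that enforces the stronger enabling condition $M(p) \ge a$ rather than merely $M(p) \ge a - \min(a,b)$. In a reversible, bounded setting the crucial step is to argue that the extra enabling constraint imposed by the self-loop is either redundant (already guaranteed by other places or by reachability) or can be transferred to a pure place with a suitably increased initial marking. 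Concretely, I would compare the region $(\mathbb{R}, \mathbb{B}, \mathbb{F})$ modelling $p$ with a candidate pure region that keeps $\mathbb{R}(s') - \mathbb{R}(s) = \mathbb{F}(t) - \mathbb{B}(t)$ identical for every edge but sets the loop weight to zero, and verify that it still solves every ESSP that $p$ solved.

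The main obstacle I expect is showing that purifying $p$ does not destroy the enabling constraints: the impure place $p$ may forbid $t$ at some marking precisely because of the self-loop threshold, and a naively purified place (with loop weight removed) might wrongly enable $t$ there, introducing spurious edges into the reachability graph. To overcome this I would lean on reversibility together with the prime-cycle/T-semiflow structure: in a reversible CF system the repetitive firing behaviour forces a balance that, I claim, makes the self-loop threshold coincide with a constraint already enforceable purely. If a single pure place cannot reproduce the exact forbidding behaviour of $p$, the fallback is to add one or more auxiliary pure complementary places that together simulate the threshold test while remaining pure, relying on boundedness to bound their markings. I would then assemble the global pure solution by performing this replacement simultaneously (or iteratively) for all impure places, and finally invoke the separation-problem characterisation of the theory of regions to confirm that the resulting pure CF system has a reachability graph isomorphic to the original \lts{}, hence solves it.
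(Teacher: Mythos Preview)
Your outline points in the right direction---remove the self-loop and argue via reversibility that the threshold it imposed was redundant---but the crucial step is left as an unjustified ``I claim'', and your fallback would break the CF property. So there is a genuine gap.

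The paper's argument is short and concrete, and it is exactly the content you are missing at the ``I claim''. With your notation, let $p$ be an impure CF place with unique output $t$, write $h=W(t,p)>0$ and $k=W(p,t)-h$. First dispose of the degenerate cases: if $k<0$ the marking of $p$ never decreases, so once $t$ has fired you can never return to the initial marking, contradicting reversibility; if $k=0$ then for the same reason every other input weight of $p$ must be $0$ and $p$ is droppable. So assume $k>0$. The key observation is that immediately after any firing of $t$ one has $M(p)\ge h$ (since before the firing $M(p)\ge k+h$ and the firing removes $k$), and since $t$ is the only transition that removes tokens from $p$, the inequality $M(p)\ge h$ persists forever after the first firing of $t$. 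Because the \lts{} is reversible (hence strongly connected) and $t$ does occur, \emph{every} reachable marking---including $M_0$---is reachable from some post-$t$ marking, so $M(p)\ge h$ holds at all reachable markings. Consequently the purified place $p'$ with $W(p',t)=k$, $W(t,p')=0$, the same other inputs, and $M_0(p')=M_0(p)-h$ enables $t$ exactly when $p$ did and has the same incidence, so the reachability graph is unchanged. No gadgets, no region theory.

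Your fallback of ``auxiliary pure complementary places'' is not safe here: the complement of a CF place with several input transitions has several \emph{output} transitions, so it is not choice-free. If you had to rely on that fallback you would leave the target class. Also note you implicitly assume every label occurs; the paper handles non-occurring labels separately (they are trivially disabled by a fresh pure input place), and you should too.
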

\begin{proof}
Let $\TS=(S,\to,T,\is)$ be a reversible \lts{}. 
If $t\in T$ does not occur in $\to$, 
$\TS$ is solvable iff $\TS'=(S,\to,T\setminus\{t\},\is)$ is solvable
and a possible solution of $\TS$ is obtained by adding to any solution of $\TS'$ a transition $t$ and 
a fresh place $p$, initially empty, with an arc from $p$ to $t$ (e.g. with weight~1), so that 
$p$ is not a side condition\footnote{A place $p$ is a {\em side condition} if ${}^\dt p \cap p^\dt \neq \emptyset$. }. 
We can thus assume that each label of $T$ occurs in $\to$. 


\begin{figure}[!ht]  
\centering
\begin{tikzpicture}[scale=0.6]
\node[draw,place,inner sep=1pt](p)at(3,0.5){\scriptsize $\mu_0$}; 
\node[below of=p,node distance=20pt]{$p$}; 
\node[transition] (x) at (6,0.5) {$x$}; 
\node[transition] (a1) at (0,1.8) {$a_1$};
\node[transition] (a2) at (0,0.9) {$a_2$};
\node(vd)at(0,0.3){\Large ${\vdots}$};
\node[draw,transition] (am) at (0,-0.6) {$a_m$};
\draw(p)[]edge[-latex,out=-20,in=205]node[below,swap,inner sep=4pt,pos=0.45]{$k{+}h$}(x);
\draw(x)[]edge[-latex,out=155,in=20]node[above,swap,inner sep=4pt,pos=0.55]{$h$}(p);
\draw(a1)[]edge[-latex]node[auto,inner sep=1pt,pos=0.2]{$k_1$}(p);
\draw(a2)[]edge[-latex]node[auto,inner sep=1pt,pos=0.2]{$k_2$}(p);
\draw(am)[]edge[-latex]node[below,inner sep=4pt,pos=0.4]{$k_m$}(p);
\end{tikzpicture}
\caption{A general pure ($h=0$) or non-pure ($h>0$) choice-free place $p$ with initial marking $\mu_0$.
Place $p$ has at most one outgoing transition named $x$.
The set $\{a_1,\ldots,a_m\}$ comprises all other transitions, i.e., $T=\{x,a_1,\ldots,a_m\}$,
and $k_j$ denotes the weight of the arc from $a_j$ to $p$ (which could be zero).
}
\label{CF-place.fig}
\end{figure}
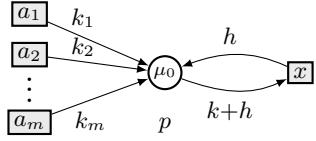
The general form of a place in a CF solution is exhibited in Fig.~\ref{CF-place.fig}.
If $h=0$, we are done, so that we shall assume $h>0$.
If $-h\leq k< 0$, the marking of $p$ cannot decrease, 
and since $x$ occurs in $\to$, the system cannot be reversible.
If $k=0$, for the same reason all the $k_i$'s must be null too, $\mu_0\geq h$, and we may drop $p$.
Hence we assume that $k>0$ and $\exists i:k_i>0$.

Once $x$ occurs, the marking of $p$ is at least $h$, remains so, and since the system is reversible, 
all the reachable markings have at least $h$ tokens in $p$. 
But then, if we replace $p$ by a place $p'$ with initially $\mu_0-h$ tokens, the same $k_i$'s and $h=0$, 
we get exactly the same reachability graph, 
but with $h$ tokens less in $p'$ than in $p$. 
This will wipe out the side condition for $p$, and repeating this for each side condition, 
we get an equivalent pure and choice-free solution.
\end{proof}

\begin{theorem}[Reversible binary CF-solvability]\label{CF=WMG.thm}\\ 
A binary reversible \lts{} is CF-solvable iff it is WMG-solvable.
\end{theorem}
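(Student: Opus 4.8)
The plan is to dispose of the easy implication immediately and spend all the effort on the converse. For the easy direction, every WMG is by definition a CF net (the WMG constraint $|{}^\dt p|\le 1$ is an extra requirement on top of $|p^\dt|\le 1$), so any WMG solution is already a CF solution; hence WMG-solvability trivially implies CF-solvability. All the content lies in showing that, for a binary reversible \lts{}, CF-solvability implies WMG-solvability.

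For the converse I would start from Lemma~\ref{pure-CF.lem}: since the \lts{} is reversible and CF-solvable, it admits a \emph{pure} CF solution $\mysyst$. As in the proof of that lemma, I may assume that both labels $a,b$ of the alphabet actually occur in the transition relation, a missing label being handled by adjoining a fresh place carrying that label as its only output, which is itself a WMG place (one output, no input). It then suffices to show that each place of $\mysyst$ may be taken to have at most one input and at most one output. The heart of the argument is a case analysis on a single place $p$ over $\{a,b\}$. Since $\mysyst$ is choice-free, $|p^\dt|\le 1$. If $p$ has exactly one output transition $x$, then purity gives $p^\dt\cap{}^\dt p=\es$, so the inputs of $p$ are confined to the single remaining label $T\setminus\{x\}$; hence $|{}^\dt p|\le 1$ and $p$ is already a WMG place. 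The only places that could violate the WMG shape are therefore the output-free ones, which a priori might receive tokens from both $a$ and $b$.

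The key step I expect to carry the proof is ruling out these output-free places. Such a $p$ has $W(p,t)=0$ for every $t$, so its token count is non-decreasing along any firing. Reversibility then forces $p$ to be isolated: both $a$ and $b$ occur in $\to$, and by reversibility each lies on a cycle of the reachability graph through the initial marking; over any such cycle the marking returns to its start, so the incidence effect on $p$ is zero, i.e. $\sum_t W(t,p)\,\Parikh(t)=0$ where $\Parikh$ is the Parikh vector of the cycle. As all terms are nonnegative and the transition on that cycle contributes a strictly positive count, every input weight of $p$ must vanish. Thus $p$ is isolated, and deleting it leaves the reachability graph unchanged up to isomorphism (an isolated place only adds a constant coordinate to every marking and never disables a transition). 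Pruning all such places yields a net each of whose places is a WMG place and whose reachability graph is still isomorphic to the given \lts{}, i.e. a WMG solution.

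The main obstacle is precisely this treatment of the output-free places: one must be certain they can always be removed without disturbing the reachability graph, and it is reversibility (rather than mere boundedness) that licenses this, through the cycle argument above. Everything else — the trivial inclusion and the purity-plus-two-letters observation that output places are automatically WMG places — is routine once Lemma~\ref{pure-CF.lem} is in hand.
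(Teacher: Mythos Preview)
Your proof is correct and follows the same line as the paper: invoke Lemma~\ref{pure-CF.lem} to pass to a pure CF solution, then use purity plus the two-letter alphabet to see that every place with an output is already a WMG place. The paper's two-line argument simply asserts that the resulting places have the form of Fig.~\ref{solcyc.fig} and does not spell out the output-free case; your reversibility-based elimination of such places is a legitimate (and welcome) filling-in of a detail the paper leaves implicit.
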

\begin{proof}
If we have two labels, from Lemma~\ref{pure-CF.lem}, if there is a CF solution, 
there will be one with places of the form exhibited in
Fig.~\ref{solcyc.fig}, hence a WMG solution.
\end{proof}

\begin{figure}[!ht]
\centering
\begin{tikzpicture}[scale=0.9]
\node[place,inner sep=1pt](p)at(2,0)[label=above:$p_{a,b}$]{\scriptsize $\mu_0$}; 
\node[transition](a)at(0,0){$a$};
\node[transition](b)at(4,0){$b$};
\draw(a)[]edge[-latex]node[above left]{$m$}(p);
\draw(p)[]edge[-latex]node[above right]{$n$}(b);
\end{tikzpicture}
\caption{A generic pure CF-place with two labels.
}
\label{solcyc.fig}
\end{figure}
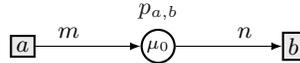

In the next section, the number of letters is no more restricted.

\section{Cyclic WMG- and CF-solvablity}\label{kletters.sec}

In this section,
we recall and extend conditions for WMG-solvability of some restricted classes of \lts{} formed by a single circuit,
which were developed in~\cite{DEH18}. 

We gradually study the separation problems -- SSPs in Subsection~\ref{ssp.subsec} and ESSPs in Subsection~\ref{essp.subsec} -- for 
cyclic solvability with WMGs, leading to a language-theoretical characterisation of cyclically WMG-solvable sequences.  
The characterisation gives rise to a polynomial-time synthesis algorithm in Subsection~\ref{firstalgo.subsec}, 
which is shown to be more efficient than the classical synthesis approach. 

Finally, in Subsection~\ref{nocf.subsec}, we study the extensibility of these results to the CF case: 
for three-letter alphabets,
we show that a word is cyclically WMG-solvable iff it is cyclically CF-solvable; 
unfortunately, for arbitrary alphabets, we show with the help of examples that the other results cannot be directly extended.

In the following, two distinct labels $a$ and $b$ are called {\em (circularly) adjacent} in a word $w$
if $w = (w_1 ab w_2)$ or $w = (b w_3 a)$ for some ${w_1,w_2,w_3\in T^*}$.
We denote by $p_{a,\ast}$ any place $p_{a,b}$ where $b$ is adjacent to $a$.
Also, if $T=\{t_0,t_1,\ldots,t_m\}$ with $m>0$, 
at least one label is adjacent to $t_0$, 
and at each point at least one label is adjacent to the ones we distinguished so far, 
until we get the whole set $T$;
we can thus start from any label $t_i$ instead of $t_0$.

\begin{theorem}[Sufficient condition for cyclic WMG-solvability~\cite{DEH18}]\label{kary.theo}\\ 
Consider any word $w$ over any finite alphabet $T$ such that $\Parikh(w)$ is prime.
Suppose the following:
$\forall u = \projection{w}{t_1 t_2}$ 
(i.e., the projection\footnote{The projection of a word $w\in A^*$ 
on a set $A' \subseteq A$ of labels is the maximum subword of $w$ 
whose labels belong to $A'$, noted $\projection{w}{A'}$. 
For example, the projection of the word $w = \ell_1 \, \ell_2 \, \ell_3 \, \ell_2$ 
on the set $\{\ell_1 ,\, \ell_2\}$ is the word $\ell_1 \, \ell_2 \, \ell_2$.} 
 of $w$ on $\{t_1,t_2\}$)
for some circularly adjacent labels $t_1,t_2$ in $w$, 
$u = v^\ell$ for some positive integer $\ell$,
$\Parikh(v)$ is prime,
and $v$ is cyclically solvable by a circuit (i.e., a circular net).
Then,
$w$ is cyclically solvable with a WMG.
\end{theorem}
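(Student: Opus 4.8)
The plan is to build an explicit WMG $N$ as a union of binary circuit-places — one pair for each circularly adjacent pair of labels — and to verify, through the theory of regions, that $\mathit{RG}(N)$ is isomorphic to the circular \lts{} induced by $w$. First I would fix the construction. For each pair of circularly adjacent labels $t_1,t_2$, the hypothesis provides $\projection{w}{t_1 t_2} = u = v^\ell$ with $\Parikh(v) = (\alpha,\beta)$ prime (say $t_1$ occurs $\alpha$ times and $t_2$ occurs $\beta$ times in $v$) and $v$ cyclically solvable by a binary circuit. That circuit is made of two places $p_{t_1,t_2}$ and $p_{t_2,t_1}$ whose weights are forced by the token balance to be $W(t_1,p_{t_1,t_2}) = \beta$, $W(p_{t_1,t_2},t_2) = \alpha$ (and symmetrically for $p_{t_2,t_1}$), with an initial marking $\mu_v$ realising $v$. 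Since $u$ is merely $v$ repeated, the very same place is $v$-periodic and therefore faithfully tracks the \emph{whole} projection $u$: firing the $\{t_1,t_2\}$-subsequence of $w$ from $\mu_v$ cycles through the markings of $v$'s circuit $\ell$ times, enabling and disabling $t_1,t_2$ exactly as in $v^\ell$. I would then take $N$ to be the union of all these places over all circularly adjacent pairs, each initialised at its $\mu_v$.

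Next I would show each such place is a region of the circular \lts{} of $w$: it ignores every label outside $\{t_1,t_2\}$ (for which $\mathbb{B}=\mathbb{F}=0$, so the region equation is preserved trivially), while on $\{t_1,t_2\}$ it coincides with the binary region, so the region condition lifts. Balancing each place over the whole word (using $\Parikh(w)(t_1)=\ell\alpha$ and $\Parikh(w)(t_2)=\ell\beta$) shows that firing $w$ returns to the initial marking, and the $v$-periodicity shows that $w$ is enabled throughout; this yields the forward inclusion, namely a cycle of length $|w|$ inside $\mathit{RG}(N)$. I would then dispatch the two families of separation problems. For the SSPs, I would argue that if two distinct cycle states $s_i,s_j$ were \emph{not} separated, then for every adjacent pair the equality of the corresponding place markings forces $\Delta_{t_1}/\Parikh(w)(t_1) = \Delta_{t_2}/\Parikh(w)(t_2)$, where $\Delta = \Parikh(w_1\ldots w_i) - \Parikh(w_1\ldots w_j)$; propagating this equality through the connected adjacency graph (guaranteed by the remark preceding the statement) yields $\Delta = c\cdot\Parikh(w)$, and primeness of $\Parikh(w)$ together with the range $\zero \le \Parikh(w_1\ldots w_i) \le \Parikh(w)$ forces $\Delta = \zero$, a contradiction. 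For the ESSPs, to block a label $t$ at a state $s_i$ whose correct successor is $c\neq t$, I would pick $a = w_{m-1}$, the letter immediately preceding the next occurrence of $t$ after position $i$ (at position $m\ge i+2$); then $a\neq t$, the pair $a,t$ is adjacent through the factor $at$, the $\{a,t\}$-projection does not have $t$ as its next letter at that point, and hence the place $p_{a,t}$ — being faithful to that projection — is under-marked and blocks $t$.

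The main obstacle, and the step demanding the most care, is the ESSP / ``no spurious firing'' direction: one must be certain that the purely pairwise constraints coming from the independent binary circuits really do combine to forbid every transition that $w$ forbids, with no reachable marking escaping the intended cycle. The two delicate points are (i) that a single place $p_{t_1,t_2}$ built from $v$ governs the full periodic projection $u = v^\ell$ — so that ``$v$ circuit-solvable'' already suffices and we never need $u$ itself to be circuit-solvable — and (ii) the choice, for each forbidden pair $(s_i,t)$, of the adjacent blocking label $a$, which is precisely what ties the local binary solutions to the global behaviour.

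Everything else is routine once the construction and the connectivity of the adjacency relation are in place: the region lifting, the balance computation, the counting of exactly $|w|$ distinct states, and the final appeal to the theory of regions, which — all SSPs and ESSPs being solved by places of $N$, and every place of $N$ being of WMG form $p_{a,b}$ — gives $\mathit{RG}(N)\cong$ circular \lts{} of $w$ with $N$ a WMG, as required.
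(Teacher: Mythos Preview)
The paper does not prove this theorem; it is quoted from~\cite{DEH18} and only recalled here as background for the sharper characterisation developed afterwards (Lemmas~\ref{SSP.lem}, \ref{adjacent.lem}, \ref{suf.lem} and Theorem~\ref{wmg-char.th}). There is therefore no ``paper's own proof'' to compare against directly.

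That said, your argument is correct and is precisely in the spirit of the machinery the paper builds for the stronger results. Your SSP argument---propagate the ratio $\Delta_{t_1}/\Parikh(w)(t_1)=\Delta_{t_2}/\Parikh(w)(t_2)$ along the connected adjacency graph and invoke primeness of $\Parikh(w)$---is exactly the proof of Lemma~\ref{SSP.lem}. Your ESSP argument---to block $t$ at $s_i$, look at the letter $a$ immediately preceding the next occurrence of $t$, observe that $a,t$ are adjacent via the factor $at$, and use that in the $\{a,t\}$-projection the next letter is $a$ (not $t$) so the circuit place $p_{a,t}$ is under-marked---is the same mechanism the paper exploits in Lemma~\ref{adjacent.lem} and the second half of the proof of Lemma~\ref{suf.lem}. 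The observation that solving $v$ by a circuit already controls the whole periodic projection $u=v^\ell$ (your point~(i)) is the one genuinely extra ingredient needed for this theorem over the later lemmas, and you handle it correctly: the binary circuit's reachability graph has $|v|$ states, and firing $u$ simply traverses that circle $\ell$ times, so enabledness at every intermediate state is inherited verbatim. Nothing is missing.
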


\begin{theorem}[Cyclic WMG-solvability of ternary words~\cite{DEH18}]\label{cyclicsolvternary.theo}\\ 
Consider a ternary word $w$ (with three letters in its alphabet $T$) 
with Parikh vector $(x,x,y)$ such that $\gcd(x,y)=1$. 
Then,
$w$ is cyclically solvable with a WMG
if and only if, for any pair $t_1\neq t_2 \in T$ such that
$w = (w_1 t_1 t_2 w_2)$ or $w = (t_2 w_3 t_1)$, 
$u = v^\ell$ for some positive integer $\ell$ with $u = \projection{w}{t_1t_2}$,
$\Parikh(v)$ is prime,
and $v$ is cyclically solvable by a circuit.
\end{theorem}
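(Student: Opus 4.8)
The plan is to prove the two implications by very different means, reusing the established sufficient condition for one direction and concentrating on necessity for the other. The ``if'' direction is almost immediate: the gcd of the components of $(x,x,y)$ equals $\gcd(x,y)=1$, so $\Parikh(w)$ is prime, and the right-hand side of the equivalence is then exactly the hypothesis of Theorem~\ref{kary.theo}, whose conclusion is the cyclic WMG-solvability of $w$. Nothing specific to three letters is needed here.

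For the converse, fix a connected WMG system $\mathcal{S}$ solving the circular \lts{} induced by $w$ (connectedness is harmless by the prime-factor decomposition recalled in Section~\ref{Def.sec}). Since the reachability graph is a single cycle with Parikh vector $\nu=(x,x,y)$, this prime vector is the unique minimal T-semiflow of $\mathcal{S}$, so every place $p$ with input $t_i$ and output $t_j$ obeys $W(t_i,p)\,\nu(t_i)=W(p,t_j)\,\nu(t_j)$. The heart of the argument is a projection lemma: for each circularly adjacent pair $(t_1,t_2)$, the binary projection $u=\projection{w}{t_1t_2}$ is itself cyclically solvable by a circuit.

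Granting the lemma, the conclusion follows from a short binary analysis driven by the shape $(x,x,y)$. Fix a circuit $C$ cyclically solving $u$; in $C$ both connecting places $p_{t_1,t_2}$ and $p_{t_2,t_1}$ are present, and the balance equation applies to each. If $t_1,t_2$ are the two letters of count $x$, then the input and output weights of $p_{t_1,t_2}$ coincide, say both equal $m$; a consecutive repetition $t_1t_1$ in $u$ is then impossible, since after a single firing of $t_1$ the place $p_{t_1,t_2}$ holds at least $m$ tokens and hence enables $t_2$, which would branch the reachability graph of $C$. By symmetry $t_2t_2$ is excluded as well, so the equal-count word $u$ alternates: $u=(t_1t_2)^x=v^x$ with $v=t_1t_2$, $\Parikh(v)=(1,1)$ prime, and $v$ trivially circuit-solvable. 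If instead the pair involves the letter of count $y$, then $\Parikh(u)\in\{(x,y),(y,x)\}$ is prime because $\gcd(x,y)=1$, forcing $\ell=1$ and $v=u$; the lemma already supplies the required circuit solution. Hence the right-hand condition holds for every adjacent pair.

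The delicate point, which I expect to dominate the proof, is the projection lemma. One cannot simply delete the third transition and its incident places: that only relaxes constraints and may create interleavings absent from $w$, and moreover $\mathcal{S}$ need not even contain a direct place between $t_1$ and $t_2$, the ordering of the pair being possibly mediated through $t_3$. The task is therefore to show that the projected binary cyclic behaviour is nonetheless realisable by a circuit. I would attack this using the persistence and commutation properties of WMGs together with the uniqueness of the prime T-semiflow $(x,x,y)$, which constrains the relative throughput of each pair and lets one recover a consistent prime sub-semiflow on two letters; turning that into an actual circuit, while ruling out spurious interleavings, is exactly where the three-letter hypothesis is indispensable and is the step I expect to be hardest.
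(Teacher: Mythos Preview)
The paper does not contain a proof of this theorem: it is merely quoted from~\cite{DEH18} as background for the new results of Section~\ref{kletters.sec}. So there is no ``paper's own proof'' to compare against.

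On the merits of your proposal: the ``if'' direction is correct and is indeed just an instance of Theorem~\ref{kary.theo} once one notes that $\gcd(x,x,y)=\gcd(x,y)=1$. Your case analysis \emph{after} the projection lemma is also sound: for the equal-count pair the balance equation forces equal weights on the circuit place, and then any repetition $t_1t_1$ would enable $t_2$ immediately after the first $t_1$, branching the reachability graph; for a mixed pair $(x,y)$ primality gives $\ell=1$ directly.

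The genuine gap is exactly where you locate it: the projection lemma is asserted, not proved. You correctly observe that deleting the third transition only relaxes constraints and that the WMG $\mathcal{S}$ need not contain a direct $(t_1,t_2)$-place, but you stop there. The hand-wave towards ``persistence and commutation properties of WMGs together with the uniqueness of the prime T-semiflow'' is not yet an argument: persistence does not by itself rule out extra interleavings in the projection, and recovering a ``consistent prime sub-semiflow on two letters'' is trivial (it is just $(\nu(t_1),\nu(t_2))/\gcd$) and does not do the work. What actually needs to be shown is that the \emph{order} of $t_1$'s and $t_2$'s in $w$ already satisfies the quantitative separation inequalities for a binary circuit; this is where the specific shape $(x,x,y)$ and the three-letter restriction must enter concretely, and your sketch does not indicate how. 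As it stands, the proposal identifies the right decomposition but leaves the decisive step open.
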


\begin{figure}[!ht] 
\centering
\begin{tikzpicture}[scale=0.63]
\begin{scope}[xshift=-4cm,yshift=0cm]
\node[place,tokens=0](p0)at(2,2.)[]{$ $};
\node[place,tokens=1](p1)at(0,3.2)[]{$ $};
\node[place,tokens=4](p2)at(0,0.755)[]{$ $};
\node[place,tokens=0](p3)at(4,0.75)[]{$ $};
\node[place,tokens=0](p4)at(4,3.2)[]{$ $};
\node[transition](a)at(0,2.){$a$}
	edge[post]node[left]{$ $}(p1)
	edge[post]node[above]{$ $}(p0)
	edge[pre]node[right]{$2$}(p2)
	;
\node[transition](b)at(4,2.){$b$}
	edge[pre]node[above]{$ $}(p0)
	edge[post]node[left]{$2$}(p3)
	edge[pre]node[below right]{$ $}(p4)
	;
\node[transition](c)at(2,3.2){$c$}
	edge[pre]node[below]{$3$}(p1)
	edge[post]node[below]{$3$}(p4)
	;
\node[transition](d)at(2,0.75){$d$}
	edge[pre]node[above]{$3$}(p3)
	edge[post]node[above]{$3$}(p2)
	;
\end{scope}
\begin{scope}[xshift=1.7cm,yshift=0.5cm]
\node[place,tokens=1](p0)at(5,0.2)[]{$ $};
\node[place,tokens=0](p1)at(5,3)[]{$ $};
\node[place,tokens=2](p2)at(0,0.2)[]{$ $};
\node[place,tokens=1](p3)at(0,3)[]{$ $};
\node[place,tokens=0](p4)at(2.5,1.9)[]{$ $};
\node[place,tokens=3](p5)at(1,1)[]{$ $};
\node[place,tokens=0](p6)at(4,1)[]{$ $};
\node[transition](a)at(0,1.9){$a$}
	edge[post]node[above]{$ $}(p4)
	edge[post]node[above]{$ $}(p3)
	edge[pre]node[above]{$ $}(p5)
	edge[pre]node[left]{$ $}(p2)
	;
\node[transition](b)at(5,1.9){$b$}
	edge[pre]node[]{$ $}(p1)
	edge[pre]node[above]{$ $}(p4)
	edge[post]node[above, near end]{$ $}(p6)
	edge[post]node[]{$ $}(p0)
	;
\node[transition](c)at(2.5,3){$c$}
	edge[pre]node[below]{$3$}(p3)
	edge[post]node[below]{$3$}(p1)
	;
\node[transition](d)at(2.5,1){$d$}
	edge[pre]node[above]{$3$}(p6)
	edge[post]node[above]{$3$}(p5)
	;
\node[transition](e)at(2.5,0.2){$e$}
	edge[post]node[above, near start]{$3$}(p2)
	edge[pre]node[above, near start]{$3$}(p0)
	;
\end{scope}
\begin{scope}[xshift=8.5cm,yshift=0cm]
\node[place,tokens=0](p0)at(2,2.)[]{$ $};
\node[place,tokens=1](p1)at(0,3.2)[]{$ $};
\node[place,tokens=4](p2)at(0,0.755)[]{$ $};
\node[place,tokens=0](p3)at(4,0.75)[]{$ $};
\node[place,tokens=2](p4)at(4,3.2)[]{$ $};
\node[transition](a)at(0,2.){$a$}
	edge[pre]node[left]{$ $}(p1)
	edge[post]node[above]{$ $}(p0)
	edge[pre]node[right]{$2$}(p2)
	;
\node[transition](b)at(4,2.){$b$}
	edge[pre]node[above]{$ $}(p0)
	edge[post]node[left]{$2$}(p3)
	edge[post]node[below right]{$ $}(p4)
	;
\node[transition](c)at(2,3.2){$c$}
	edge[post]node[below]{$3$}(p1)
	edge[pre]node[below]{$3$}(p4)
	;
\node[transition](d)at(2,0.75){$d$}
	edge[pre]node[above]{$3$}(p3)
	edge[post]node[above]{$3$}(p2)
	;
\end{scope}
\end{tikzpicture}
\caption{The WMG on the left solves $aacbbdabd$
cyclically, and the WMG in the middle solves 
$aacbbeabd$ cyclically.
On the right, the WMG solves $abcabdabd$ cyclically. 
}
\label{counterex.fig}
\end{figure}
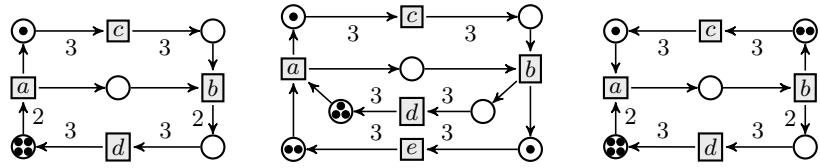

For a circular \lts{}, the solvability of its binary projections by circuits is a sufficient condition, as specified by 
 Theorem~\ref{kary.theo}, 
but it turns out not to be a necessary one. 
Indeed, for the cyclically solvable sequence $w_1 = aacbbdabd$ 
(cf. left of Fig.~\ref{counterex.fig}), 
its binary projection on $\{ a,b\}$ is $\projection{w_1}{a,b} = aabbab$ which is not cyclically solvable with a WMG 
(neither generally solvable).
Looking only at the Parikh vector of the sequence is also not enough to establish its cyclic (un)solvability. 
For instance, sequences $w_2 = abcabdabd$ and $w_3 = abcbadabd$ 
are Parikh-equivalent: $\Parikh(w_2) = \Parikh(w_3) = (3,3,1,2)$ 
(and also Parikh-equivalent to $w_1$), but $w_2$ is cyclically solvable with a WMG 
(e.g. with the WMG on the right of Fig.~\ref{counterex.fig}) and $w_3$ is not WMG-cyclically solvable.

All the binary projections of $w_1$ and $w_3$ are cyclically WMG-solvable, except $\projection{w_i}{a,b}$. 
Only the unsolvability of $\projection{w_3}{a,b}$ implies the unsolvability of $w_3$. 
Since all the $w_i$ are Parikh-equivalent, then so are their binary projections.
Thus, we have to analyse the sequences themselves, without abstracting to Parikh vectors. 
Since the projections $\projection{w_1}{a,b}$ and $\projection{w_3}{a,b}$ are equivalent 
(up to cyclic rotation and swapping $a$ and $b$), 
it is not sufficient to check the `problematic' binary projections. 
We then study the conditions for solvability of separation problems.

\subsection{SSPs for Prime Cycles}\label{ssp.subsec} 

For any word $w=t_0\ldots t_k$, for $0 \le i,j \le k$ such that $i\neq j$, we note 
$\Parikh_{ij} = \Parikh(t_i t_{i+1} \ldots t_{j-1} )$ if $i < j$ and 
$\Parikh_{ij} = \Parikh(t_i t_{i+1} \ldots t_{k-1} t_k t_0 t_1 \ldots t_{j-1} )$ if $i > j$. 

\begin{lemma}[SSPs are solvable for prime cycles]\label{SSP.lem}  
For a cyclic transition system $TS = (S,\to, T, s_0)$ defined by some word $w=t_0\ldots t_k$,  
where $S= \{s_0, \ldots, s_k\}$, $\to = \{(s_{i-1}, t_{i-1}, s_{i})\mid 1\le i \le k\}\cup \{(s_k,t_k,s_0)\}$, 
if $\Parikh(w)$ is prime then all the SSPs are solvable.
\end{lemma}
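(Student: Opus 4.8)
The plan is to solve each SSP by exhibiting a region whose token count tells the two given states apart. The key observation is that, on this cyclic \lts{}, a region $(\mathbb{R},\mathbb{B},\mathbb{F})$ is completely governed by its \emph{effect vector} $e\in\zsymbol^{T}$, defined by $e(t)=\mathbb{F}(t)-\mathbb{B}(t)$, together with a single base value $\mathbb{R}(s_0)$: walking along the cycle forces $\mathbb{R}(s_i)=\mathbb{R}(s_0)+\sum_{l=0}^{i-1}e(t_l)$, and consistency after closing the loop $s_k[t_k\rangle s_0$ requires $\Parikh(w)\cdot e=0$. Conversely, every integer $e$ with $\Parikh(w)\cdot e=0$ extends to a genuine region: set $\mathbb{B}(t)=\max(0,-e(t))$ and $\mathbb{F}(t)=\max(0,e(t))$, and pick $\mathbb{R}(s_0)$ large enough that every $\mathbb{R}(s_i)$ stays nonnegative and every enabling inequality $\mathbb{R}(s)\ge\mathbb{B}(t)$ along the arcs holds; since $S$ and $T$ are finite, such a base value always exists.

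With this in hand, separating two distinct states $s_i$ and $s_j$ reduces to finding an integer effect vector $e$ with $\Parikh(w)\cdot e=0$ and $\mathbb{R}(s_j)-\mathbb{R}(s_i)=\Parikh_{ij}\cdot e\neq 0$. Linear algebra over $\qsymbol$ tells us exactly when this fails: the functional $e\mapsto\Parikh_{ij}\cdot e$ vanishes on the whole hyperplane $\{e:\Parikh(w)\cdot e=0\}$ iff $\Parikh_{ij}$ lies in the line spanned by $\Parikh(w)$, i.e. iff $\Parikh_{ij}=\lambda\,\Parikh(w)$ for some scalar $\lambda$. So it suffices to rule out this proportionality, after which any rational witness can be scaled to an integer one and fed into the construction above.

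The crux of the proof — and the only place where primeness is used — is showing that $\Parikh_{ij}$ is never proportional to $\Parikh(w)$ when $i\neq j$. First, primeness forces any such $\lambda$ to be an integer: writing $\lambda=p/q$ in lowest terms, $q\,\Parikh_{ij}=p\,\Parikh(w)$ makes $q$ divide every component of $\Parikh(w)$, hence $q\mid\gcd=1$. Next, since the segment joining $s_i$ to $s_j$ (in either direction around the cycle) is a \emph{proper}, non-empty subword of $w$, its Parikh vector satisfies $\zero\lneqq\Parikh_{ij}\lneqq\Parikh(w)$ componentwise, which pins $\lambda$ strictly between $0$ and $1$ and contradicts integrality. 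I expect this step to be the main obstacle, chiefly in its bookkeeping: one must handle the wrap-around case $i>j$ on the same footing as $i<j$, and check that the strictly-smaller component of $\Parikh_{ij}$ is one where $\Parikh(w)$ is positive, so that $\lambda<1$ genuinely follows. Once the proportionality is excluded, a suitable $e$ exists for every pair, the induced region solves the corresponding SSP, and all SSPs are therefore solvable.
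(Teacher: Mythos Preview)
Your argument is correct and, in isolation, cleaner than the paper's: you reduce each SSP to finding an effect vector $e$ orthogonal to $\Parikh(w)$ but not to $\Parikh_{ij}$, and you rule out $\Parikh_{ij}\parallel\Parikh(w)$ directly from primeness plus the strict containment $\zero\lneqq\Parikh_{ij}\lneqq\Parikh(w)$. The bookkeeping point you flag is harmless: any component with $\Parikh_{ij}(t)<\Parikh(w)(t)$ automatically has $\Parikh(w)(t)>0$ (since $\Parikh_{ij}(t)\ge 0$), so $\lambda<1$ follows; and the wrap-around case is handled uniformly since in both orderings the segment has length between $1$ and $k$ while $w$ has length $k{+}1$.

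The paper proceeds differently, with an eye to the WMG synthesis that comes afterwards. It builds, for each pair of \emph{circularly adjacent} labels $a,b$, an explicit place $p_{a,b}$ with a single input $a$ and a single output $b$ (hence already a WMG place), weights determined by $\Parikh(w)$, and then argues by chaining along the adjacency relation: if none of these places separates $s_i$ from $s_j$, the ratios $\Parikh_{ij}(t)/\Parikh(w)(t)$ coincide across all of $\support(t_i\ldots t_{j-1})$; either this support equals $\support(w)$, forcing proportionality and contradicting primeness, or some boundary place $p_{c,d}$ with $d\in\support(t_i\ldots t_{j-1})$ and $c\notin\support(t_i\ldots t_{j-1})$ does the separation. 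What the paper's route buys is that the separating places are WMG places from the outset, so they slot directly into the later synthesis algorithm; your regions may have arbitrarily many non-zero effect components and hence do not, by themselves, stay inside the WMG class. For the lemma as stated (plain SSP solvability, no class restriction), your proof is entirely adequate and arguably more transparent.
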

\begin{proof}
If $|T|=1$, then $k=0$  and $|S|=1$, so that there is no SSP to solve. We may thus assume $|T|>1$.

For each pair of distinct labels $a,b \in T$ that are adjacent in $TS$, 
construct places $p_{a,b}$ (and $p_{b,a}$ since adjacency is commutative) 
as in Fig.~\ref{solcyc.fig} with 
\begin{equation}\label{mn.eq}
	m = \frac{\Parikh(w)(b)}{\gcd(\Parikh(w)(a),\Parikh(w)(b))},\;
	n = \frac{\Parikh(w)(a)}{\gcd(\Parikh(w)(a),\Parikh(w)(b))}, 
\end{equation}
and $\mu_0 = n\cdot \Parikh(w)(b)$.
Clearly, the markings of $p_{a,b}$ reachable by repeatedly firing $u= \projection{w}{ab}$ 
are always non-negative, and the initial marking is reproduced after each repetition of the sequence $u$. 
Consider two distinct states $s_i, s_j \in S$ (w.l.o.g. $i<j$).
We now demonstrate that there is at least one place of the form $p_{a,b}$ such that $M_i(p_{a,b}) \neq M_j(p_{a,b})$, 
where $M_l$ denotes the marking corresponding to state $s_l$ for $0 \le l\le k$. 
If $j-i=1$, then any place of the form $p_{t_{i},\ast}$ distinguishes states $s_i$ and $s_j$. 
The same is true if $j-i >1$ but $\forall l\in[i,j-1]:t_l=t_i$. 
Otherwise, choose some letter $a$ from $t_{i} \ldots t_{j-1}$ and an adjacent letter $b$.
Then $M_j(p_{a,b}) = M_i(p_{a,b}) + m\cdot \Parikh_{ij}(a) - n\cdot \Parikh_{ij}(b)$. 
If $M_i(p_{a,b})\neq M_j(p_{a,b})$, place $p_{a,b}$ distinguishes $s_i$ and $s_j$.
Otherwise we have $m\cdot \Parikh_{ij}(a) = n\cdot \Parikh_{ij}(b)$, hence, due to the choice of $m$ and $n$:
	\[ \frac{\Parikh_{ij}(a)}{\Parikh(w)(a)} = \frac{\Parikh_{ij}(b)}{\Parikh(w)(b)}
	\]
	(so that $b$ also belongs to $t_{i} \ldots t_{j-1}$).
Consider some other letter $c$ which is adjacent to $a$ or $b$. If place $p_{a,c}$  distinguishes $s_i$ and $s_j$, we are done. 
Otherwise, due to the choice of the arc weights for these places, we have 
	\[ \frac{\Parikh_{ij}(a)}{\Parikh(w)(a)} = \frac{\Parikh_{ij}(c)}{\Parikh(w)(c)} = \frac{\Parikh_{ij}(b)}{\Parikh(w)(b)}. 
	\]
Since $t_i \ldots t_{j-1}$ is finite, by progressing along the adjacency relation, 
either we find a place which has different markings at $s_i$ and $s_j$, 
or for all $a,b \in \support(t_i \ldots t_{j-1})$ we have  
	\[
		\frac{\Parikh_{ij}(a)}{\Parikh(w)(a)} = \frac{\Parikh_{ij}(b)}{\Parikh(w)(b)}. 
	\]
If $\support(t_i \ldots t_{j-1}) = \support(w)$, 
$\Parikh(w)$ is proportional to $\Parikh(t_i \ldots t_{j-1})$, but since $t_i \ldots t_{j-1}$ is smaller than $w$
(otherwise $s_i=s_j$) this contradicts the primality of $\Parikh(w)$. 
Hence, there exist adjacent $c$ and $d$ such that $c \in \support(w)\setminus \support(t_i \ldots t_{j-1})$ 
and $d \in \support(t_i \ldots t_{j-1})$. For the place $p_{c,d}$ we have $M_j(p_{c,d}) \neq M_i(p_{c,d})$.
\end{proof}

\vspace{\baselineskip} 
This property has some similarities with Theorem 4.1 in \cite{DH-FI19}, but the preconditions are different.
The reachability graph of any CF net, hence of any WMG, satisfies the prime cycle property~\cite{BD-I-and-C,bds17}. 
Thus, primeness of a sequence avoids solving SSPs when aiming at these two classes of Petri nets.

\subsection{ESSPs in Cyclic WMG-Solvability}\label{essp.subsec}

Now, we develop further conditions for the cyclic WMG-solvability. 

\begin{lemma}[Special form of WMG solutions for cycles]\label{adjacent.lem}
If $w \in T^\ast$ is cyclically solvable by a WMG, there exists a WMG $\mysyst= ((P,T,W),M_0)$, 
where $P$ consists of places $p_{a,b}$, for each pair of distinct 
circularly adjacent $a$ and $b$ (i.e., either  $w=u_1abu_2$ or $w=bu_3a$).
\end{lemma}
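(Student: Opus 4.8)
The plan is to start from an arbitrary WMG solution $\mysyst'$ of the circular \lts{} induced by $w$ and to reshape it into a solution whose places are indexed exactly by the circularly adjacent pairs. First I would record three structural facts. Since every place of a WMG has at most one input and one output transition, each relevant place of $\mysyst'$ has the form $p_{a,b}$ of Fig.~\ref{solcyc.fig} with a single producer $a$ and a single consumer $b$; places with no consumer, no producer, or which are side conditions can be discarded or normalised exactly as in the proof of Lemma~\ref{pure-CF.lem}, so I assume $a \neq b$ throughout. Next, because the reachability graph of a WMG has the prime cycle property and the only cycle of a circular \lts{} is the full loop, $\Parikh(w)$ must be prime. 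Finally, reproducing the initial marking after one full turn forces, for every place $p_{a,b}$, the balance $W(a,p_{a,b})\cdot\Parikh(w)(a) = W(p_{a,b},b)\cdot\Parikh(w)(b)$, so its two weights are a common multiple of the canonical pair $(m,n)$ of~\eqref{mn.eq}. This reduces all remaining freedom to the choice of the initial marking, up to a global scaling.

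With these normalisations in hand, the states separation problems are already handled: by Lemma~\ref{SSP.lem}, the canonical adjacent places $p_{a,b}$ of Fig.~\ref{solcyc.fig} distinguish every pair of states, using only pairs $a,b$ that are circularly adjacent. Hence the whole difficulty concentrates on the event/state separation problems, i.e. on disabling each label $b$ at every state $s_i$ with $t_i \neq b$, while keeping $b$ enabled at every genuine $b$-step.

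The heart of the argument is that adjacent places alone already solve these ESSPs. Reading the cyclic word forward from any state $s_i$ at which $b$ is forbidden, let $s_j$ be the next $b$-step; its immediate predecessor $t_{j-1}=a$ makes $a$ circularly adjacent to $b$. I would then choose the marking of the canonical place $p_{a,b}$ so that its token count reaches the threshold $W(p_{a,b},b)$ exactly when the last $a$ before this $b$ fires, and stays strictly below it on the whole gap $s_i,\ldots,s_{j-1}$; since the marking of $p_{a,b}$ only moves on $a$ (up) and $b$ (down), monotonicity on the gap turns this into an interval condition. Ranging $a$ over all predecessors of $b$, the corresponding places $p_{a,b}$ jointly forbid $b$ on every gap, i.e. at every non-$b$ state. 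Because removing a place can only relax constraints, and every forbidding previously done by a non-adjacent place of $\mysyst'$ is now reproduced by one of these adjacent places, the non-adjacent places may be deleted, leaving a WMG $\mysyst$ whose place set is exactly $\{p_{a,b} \mid a,b \text{ circularly adjacent}\}$ and whose reachability graph is unchanged.

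The main obstacle is the cross-occurrence consistency of this marking choice: the place $p_{a,b}$ introduced to forbid $b$ on the gaps ending in the pattern $ab$ must not spuriously forbid $b$ at an occurrence of $b$ whose predecessor is a different letter, nor enable it one step too early. I expect to control this using the proportional weights $(m,n)$ together with the primeness of $\Parikh(w)$: these guarantee that the marking of $p_{a,b}$ depends only on the projection $\projection{w}{ab}$ and is reproduced after each full turn, so that the threshold $W(p_{a,b},b)$ is met at precisely the $ab$-transitions and at no other $b$-step. Verifying this quantitative claim, and checking that the collection of adjacent places never under-constrains $b$ at a state where it must remain disabled, is the part requiring careful bookkeeping, the rest being structural.
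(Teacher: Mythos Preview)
Your overall structure matches the paper's, but you introduce an unnecessary complication that creates precisely the gap you yourself flag. The paper does \emph{not} construct fresh markings for the canonical places $p_{a,b}$. Instead, it observes that in the given solution $\mysyst'$, any place solving the critical ESSP $\lnot s_i[b\rangle$ at the state immediately preceding an $ab$-pattern must already be of the form $p_{a,b}$: it is an input of $b$ (since it disables $b$), and its marking strictly increases from $s_i$ to $s_{i+1}$ (since $b$ becomes enabled there), so it is also an output of $a$. This \emph{existing} place, with its \emph{existing} marking, then automatically disables $b$ on the entire $b$-free segment leading up to $s_i$ (its marking is non-decreasing there) and automatically enables $b$ at every genuine $b$-step (because $\mysyst'$ is a solution). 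Cross-occurrence consistency is therefore free, not something to be verified.

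Your plan to choose the marking of $p_{a,b}$ from scratch so that ``the threshold is met at precisely the $ab$-transitions and at no other $b$-step'' is both over-constrained (with several $ab$-occurrences you cannot in general force all of them to sit exactly at threshold) and under-justified: primeness of $\Parikh(w)$ and the ratio $m/n$ give you only the balance equation $I\cdot\Parikh(w)=0$, not the pointwise enabling behaviour at $b$-steps preceded by some $c\neq a$. In the paper's logical order, completing this would amount to invoking the content of Lemmas~\ref{nec.lem} and~\ref{suf.lem} before they are available. The fix is simply to retain the adjacent-form places already present in $\mysyst'$ and discard the rest; a short redundancy argument (among several places of the same form $p_{a,b}$, the one with smallest normalised initial marking dominates the others) then reduces to a single place per adjacent pair.
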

\begin{proof}
Consider a sequence $w = t_0 \ldots t_k$, where $\Parikh(w)$ is prime. 
Let us assume that the system $((P,T,W),M_0)$ is a WMG solving $w$ cyclically. 
Due to the definition of WMGs, all the places that we have to consider are of the form schematised in Fig.~\ref{genplaceab.fig}. 
The arc weights may differ due to the parameter $l > 0$, 
but the ratio $\frac{W(a, p_{a,b})}{W(p_{a,b},b)} = \frac{m}{n}$ is determined by the Parikh vector of $w$ 
and its cyclic solvability; the initial marking is to be defined.
Moreover, we have to consider only those places which are connected to the pairs of circularly adjacent transitions in $w$. 
Indeed, if $w = u_1  \cstate{s_i} a \cstate{s_{i+1}} b\; u_2$, where $b\neq a$, $s_i$ is the state reached after performing $u_1$ 
and $s_{i+1}$ is the state reached after performing $u_1 a$, then 
any place that solves the ESSP $\lnot M_{i}[b \rangle$ is an input place for $b$. 
On the other hand, any place whose marking at $M_{s_i}$ differs from its marking at $M_{s_{i+1}}$ is connected to $a$. 
Hence, a place $p \in P$ solving $\lnot M_{i}[b \rangle$ is of the form $p_{a, b}$. 
Since $p$ is only affected by $a$ and $b$, 
it also disables $b$ at all the states between $s_{l}$ and $s_i$ in $w$
when it is of the form 
$w = u_3  \cstate{s_j} t_{j} \cstate{s_{j+1}} b^+ \cstate{s_l} u_4 \cstate{s_i} a b u_2$ with $\Parikh(u_4)(b) = 0$ 
(in the case there is no $b$ in the prefix between $s_0$ and $abu_2$, $s_l=s_0$). 
Analogously, if  $t_{j} \neq b$, there must be a place $q \in P$ 
of the form $p_{t_j,b}$ that solves $\lnot M_{s_j} [b \rangle$. 
Doing so, we ascertain that the places of the form schematised
in Fig.~\ref{genplaceab.fig} for the adjacent pairs of transitions are sufficient to handle all the ESSPs.

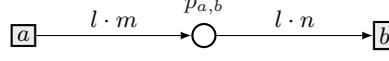
\begin{figure}[!ht] 
\centering
\begin{tikzpicture}[scale=0.8]
\begin{scope}[xshift=0cm,yshift=0cm]
\node[place,tokens=0](p)at(2,0)[label=above:$p_{a,b}$]{$  $};
\node[transition](a)at(-1,0){$a$};
\node[transition](b)at(5,0){$b$};
\draw(a)[]edge[-latex]node[above]{$l\cdot m$}(p); 
\draw(p)[]edge[-latex]node[above]{$l\cdot n$}(b);
\end{scope}
\end{tikzpicture}
\vspace*{-1mm}
\caption{A general place from $a$ to $b$ in a WMG solution of $w$: $m = \Parikh(w)(b)$, $n = \Parikh(w)(a)$, $l$ may be any multiple of $1/\gcd(m,n)$.} 
\label{genplaceab.fig}
\end{figure}

In fact, for each pair of adjacent transitions $a$ and $b$ in $w$, a single place of the form $p_{a,b}$ is sufficient. 
Indeed, assume there are $p_1, p_2 \in P$ of the form $p_{a,b}$. 
If $\frac{M_0(p_1)}{\gcd(W(a,p_1),W(p_1,b))} \ge \frac{M_0(p_2)}{\gcd(W(a,p_2),W(p_2,b))}$
then for any $M \in [M_0\rangle$, $M(p_1) < W(p_1,b)$ implies $M(p_2) < W(p_2,b)$. 
Hence, $p_1$ is redundant in the system. 
It means we can choose $l$ as we want (among the multiples of $1/\gcd(m,n)$) 
and only keep the place of the form $p_{a,b}$ in any solution with the smallest initial marking. 
Note that it may happen that we need a place $p_{a,b}$, but not $p_{b,a}$.
\end{proof}

\vspace*{\baselineskip}
The existence of a WMG solution of this special form 
allows us to establish a necessary condition for the cyclic solvability of sequences. 

\begin{lemma}[A necessary condition for cyclic solvability with a WMG]\label{nec.lem}
If $w\in T^\ast$ is cyclically solvable by a WMG, then for any adjacent transitions $a$ and $b$ in $w$, 
and any two occurrences of $ab$ in  
	$ w = u_1 \, \cstate{s_r} a \, b \, \ldots \, \cstate{s_q} \, a \, b \, u_2, 
	$
the inequality
\begin{equation}\label{interval.eq} 
	 \frac{\Parikh_{rj}(b)-1}{\Parikh_{rj}(a)} < \frac{m}{n} < \frac{\Parikh_{jq}(b)+1}{\Parikh_{jq}(a)} 
\end{equation}
holds true, where $m,n$ are as in~(\ref{mn.eq}), $ r\le j \le q$, and
the right inequality is omitted when $\Parikh_{jq}(a) = 0$ and the left inequality is omitted when $\Parikh_{rj}(a) = 0$. 
\end{lemma}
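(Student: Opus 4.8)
The plan is to exploit the special form of WMG solution provided by Lemma~\ref{adjacent.lem}, so that for the adjacent pair $a,b$ there is a single place $p_{a,b}$ with an arc of weight $l\cdot m$ from $a$ and an arc of weight $l\cdot n$ to $b$, as in Fig.~\ref{genplaceab.fig}, and then to reduce the two inequalities of~(\ref{interval.eq}) to two elementary facts about the marking of this place: it never drops below $0$, and it is strictly below the firing threshold $l\cdot n$ of $b$ at the two states $s_r$ and $s_q$ that immediately precede the chosen occurrences of $ab$. Throughout, I write $M_s$ for the marking of $p_{a,b}$ at state $s$, and I recall that $n=\Parikh(w)(a)/\gcd(\Parikh(w)(a),\Parikh(w)(b))\ge 1$ (as $a$ occurs) and $l>0$, so that all denominators appearing below are positive.

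First I would establish the threshold bounds $M_{s_r}<l\cdot n$ and $M_{s_q}<l\cdot n$. Since the reachability graph of the solution is the single cycle induced by $w$, only $a$ (and not $b$) is enabled at $s_r$; hence some input place of $b$ must disable $b$ there, and by the argument already used in the proof of Lemma~\ref{adjacent.lem} — such a place has strictly larger marking at $s_{r+1}$ than at $s_r$, where only $a$ fires in between, so it is an output of $a$, and being in a WMG it is exactly $p_{a,b}$ — this disabling place is $p_{a,b}$ itself. As the special form keeps a \emph{unique} place of shape $p_{a,b}$, the same place must disable $b$ at every state preceding an occurrence of $ab$, in particular at both $s_r$ and $s_q$, giving $M_{s_r}<l\cdot n$ and $M_{s_q}<l\cdot n$. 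This step, namely pinning the disabling responsibility on the single place $p_{a,b}$ simultaneously for both occurrences, is the crux of the argument; the rest is bookkeeping.

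It then remains to translate~(\ref{interval.eq}) into these bounds. Using $M_{s_j}=M_{s_r}+l\cdot m\cdot\Parikh_{rj}(a)-l\cdot n\cdot\Parikh_{rj}(b)$ and clearing denominators by the positive quantity $n\cdot\Parikh_{rj}(a)$, the left inequality is equivalent to $M_{s_r}-M_{s_j}<l\cdot n$; symmetrically, from $M_{s_q}=M_{s_j}+l\cdot m\cdot\Parikh_{jq}(a)-l\cdot n\cdot\Parikh_{jq}(b)$ and clearing by $n\cdot\Parikh_{jq}(a)$, the right inequality is equivalent to $M_{s_q}-M_{s_j}<l\cdot n$. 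Both now follow at once: since $M_{s_j}\ge 0$ and the threshold bounds hold, $M_{s_r}-M_{s_j}\le M_{s_r}<l\cdot n$ and $M_{s_q}-M_{s_j}\le M_{s_q}<l\cdot n$. Finally I would note that the two omitted cases correspond exactly to a vanishing denominator: $\Parikh_{rj}(a)=0$ forces $j=r$ (because $t_r=a$), and $\Parikh_{jq}(a)=0$ means no $a$ occurs strictly between $s_j$ and $s_q$, so in each of these cases the corresponding inequality is simply not formed.
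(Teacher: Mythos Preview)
Your proof is correct and follows essentially the same route as the paper's: both invoke Lemma~\ref{adjacent.lem} to reduce to the single place $p_{a,b}$, use that this place disables $b$ at $s_r$ and at $s_q$ (i.e.\ $M_{s_r},M_{s_q}<l\cdot n$) together with non-negativity $M_{s_j}\ge 0$, and deduce the two inequalities by the same algebra. The only cosmetic differences are that the paper fixes $l=1$ and writes out the system~(\ref{system.eq}) explicitly (including the unused constraint $s_{r+1}[b\rangle$), whereas you keep $l$ general and package the computation neatly as the bounds $M_{s_r}-M_{s_j}<l\cdot n$ and $M_{s_q}-M_{s_j}<l\cdot n$.
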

\begin{proof}
Let $w$ be cyclically solvable with a WMG $\mysyst=((P,T,W),M_0)$ 
as in Lemma~\ref{adjacent.lem}, and place $p \in P$ 
be of the form $p_{a,b}$ (as in Fig.~\ref{genplaceab.fig}, with $l=1$ and a well chosen initial marking) 
for an adjacent pair $ab$.
Choose two $ab$'s in 
$w = u_1 \, \cstate{s_r} a \, \cstate{s_{r+1}} \, b \, \cstate{s_{r+2}} \, \ldots \, \cstate{s_q} \, a \, \cstate{s_{q+1}} \, b \, u_2$ 
with possibly other letters between $s_{r+2}$ and $s_q$ 
(if there is only one $ab$, apply the argumentation while wrapping around $w$ circularly, i.e., $s_r[w\rangle s_q$). 
Since $p$ solves ESSPs $\lnot s_r [b\rangle$ and $\lnot s_q [b \rangle$, the next inequalities hold true, 
where $\mu_r$ denotes the marking of $p_{a,b}$ at state $s_r$: 
\begin{equation}\label{system.eq}
\begin{aligned}
	\lnot s_r [b \rangle: & \hspace*{4mm} \mu_r 												& <\; & n \\
	s_{r+1} [b \rangle:  & \hspace*{4mm} \mu_r + m 									& \ge\; & n  \\
	\forall j:\; r \leq j \leq q: & \hspace*{4mm}  \mu_r + \Parikh_{rj}(a)\cdot m - \Parikh_{rj}	(b)\cdot n  &\ge\; & 0 \\
	\lnot s_q [b \rangle: & \hspace*{4mm} \mu_r + \Parikh_{rq}(a)\cdot m - \Parikh_{rq}(b)\cdot n & <\; & n\\
\end{aligned}
\end{equation} 

From the first and the third line of (\ref{system.eq}) we get $\Parikh_{rj}(a)\cdot m- \Parikh_{rj}(b)\cdot n > - n$. 
This implies:
\begin{equation}\label{interval1.eq}
	 \frac{\Parikh_{rj}(b)-1}{\Parikh_{rj}(a)} < \frac{m}{n}\mbox{ when } r< j \le q. 
\end{equation}
From the third and the fourth line of (\ref{system.eq}) we obtain 
 \[(\Parikh_{rq}(a) - \Parikh_{rj}(a))\cdot m - (\Parikh_{rq}(b) -  \Parikh_{rj}(b))\cdot n < n.
 \]
If $\Parikh_{jq}(a) \neq 0$, since $\Parikh_{rq}=\Parikh_{rj}+\Parikh_{jq}$ this inequality can be written as 
\begin{equation}\label{interval2.eq}
	 \frac{m}{n} < \frac{\Parikh_{jq}(b)+1}{\Parikh_{jq}(a)}.
\end{equation}
Thus, from~(\ref{interval1.eq})
and~(\ref{interval2.eq}) we have a necessary condition for solvability. 
\end{proof}

\vspace*{\baselineskip}
In particular, Lemma~\ref{nec.lem} explains the cyclic unsolvability of the word
$w_3 = \, \cstate{s_r} \,ab \, c \, b \, \cstate{s_j} \,   a \, d \, \cstate{s_q} \, ab \, d$.
Indeed, $\Parikh(w_3)(b)=3=\Parikh(w_3)(a)$, so that $m/n=1$
and $1 \nless \frac{0+1}{1} = \frac{\Parikh_{jq}(b)+1}{\Parikh_{jq}(a)}.$ 
Moreover, the necessary condition for cyclic sovability from Lemma~\ref{nec.lem} extends to a sufficient one in the following sense. 

\begin{lemma}[A sufficient condition for cyclic solvability by a WMG]\label{suf.lem} 
If $w \in T^\ast$ has a prime Parikh vector, and for each circularly adjacent $ab$ pair in 
	$ w = \ldots \, \cstate{s_q} \, a \, b \, \ldots, $ 
the inequality  
\begin{equation}\label{halfinterval.eq} 
	 \frac{m}{n} < \frac{\Parikh_{jq}(b)+1}{\Parikh_{jq}(a)} 
\end{equation}
holds true for any $s_j$ such that $\Parikh_{jq}(a)\neq 0$, then $w$ is cyclically WMG-solvable. 
\end{lemma}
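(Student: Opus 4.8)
The plan is to construct an explicit WMG $\mysyst=((P,T,W),M_0)$ of the special form guaranteed by Lemma~\ref{adjacent.lem}, and to prove directly that its reachability graph is isomorphic to the circular \lts{} induced by $w$. First I would build, for each circularly adjacent pair $a,b$ in $w$, the place $p_{a,b}$ as in Fig.~\ref{genplaceab.fig} with the canonical weights $m=\Parikh(w)(b)/\gcd$ and $n=\Parikh(w)(a)/\gcd$ from~(\ref{mn.eq}) (taking $l=1$). The key design choice is the initial marking: for each such place I would set $\mu_0(p_{a,b})$ to the smallest value making the place firing-consistent over one full turn of $w$ while still disabling $b$ exactly where $w$ requires it. Concretely, the hypothesis~(\ref{halfinterval.eq}) is precisely the constraint that allows the marking of $p_{a,b}$ to stay nonnegative along the cycle yet drop below $n$ at each occurrence of $a\,b$, so I would pick $\mu_0(p_{a,b})$ as the unique value compatible with these half-interval bounds.

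The verification then splits into the two families of separation problems, exactly as set up in Subsection~\ref{ssp.subsec} and Subsection~\ref{essp.subsec}. For the SSPs, since $\Parikh(w)$ is prime by hypothesis, Lemma~\ref{SSP.lem} applies verbatim and all states-separation problems are already solved by the places I have introduced, so no extra work is needed. For the ESSPs, I would argue that at every state $s_i$ and every label $t$ with $\lnot s_i[t\rangle$ in the target \lts{}, some place $p_{a,t}$ (with $a$ the letter immediately preceding the blocked occurrence) has marking strictly below its output weight $n$, hence genuinely disables $t$; conversely, at each state where $t$ is meant to be enabled, the corresponding place has marking at least $n$. This is where inequality~(\ref{halfinterval.eq}) does the real work: rewriting $\frac{m}{n}<\frac{\Parikh_{jq}(b)+1}{\Parikh_{jq}(a)}$ as $\mu_r+\Parikh_{jq}(a)\cdot m-\Parikh_{jq}(b)\cdot n<n$ recovers the disabling condition of the fourth line of~(\ref{system.eq}), while the complementary enabling condition follows from the choice of $\mu_0$ together with the nonnegativity of all reachable markings.

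The main obstacle I expect is establishing the global consistency of the construction rather than any single local inequality: I must show that the per-pair initial markings $\mu_0(p_{a,b})$, each chosen to satisfy the half-interval bound~(\ref{halfinterval.eq}) independently, jointly yield a marking that returns exactly to $M_0$ after firing $w$ once, so that the reachability graph closes into a single cycle of the correct length and does not admit any spurious extra firings. Because only the upper half of the two-sided condition~(\ref{interval.eq}) is assumed here (the lower bound being automatic once the initial marking is taken minimal and the place never goes negative), I would verify that minimality of $\mu_0(p_{a,b})$ forces the left inequality of~(\ref{interval.eq}) to hold for free, and then invoke primeness of $\Parikh(w)$ once more to rule out any proper sub-cycle, guaranteeing that $\mathit{RG}(\mysyst)$ is isomorphic to the induced circular \lts{}. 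Assembling these observations, together with Lemma~\ref{adjacent.lem} which certifies that places of this special form suffice, completes the argument.
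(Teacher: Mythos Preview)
Your plan follows the paper's own proof closely: build the places $p_{a,b}$ with weights from~(\ref{mn.eq}), choose each initial marking as the minimal value making the marking nonnegative along one turn of $w$, invoke Lemma~\ref{SSP.lem} for the SSPs, and use~(\ref{halfinterval.eq}) for the ESSPs. The paper makes the marking choice explicit: start with $n\cdot\Parikh(w)(b)$ tokens, fire $w$ once, locate a state $s'$ (preceding an $a$) where the count is minimal, and shift so that $M_{s'}(p_{a,b})=0$. Then for any other $s_q$ with $s_q[ab\rangle$, the inequality~(\ref{halfinterval.eq}) applied with $s_j=s'$ yields $M_{s_q}(p_{a,b})=\Parikh_{s'q}(a)\,m-\Parikh_{s'q}(b)\,n<n$, so $b$ is disabled there. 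Your rewriting ``$\mu_r+\Parikh_{jq}(a)\cdot m-\Parikh_{jq}(b)\cdot n<n$'' is only valid once you have pinned $\mu_r=0$ at the minimum state; that is the content of the marking choice, not an independent step.

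There is one genuine step you do not spell out. You claim that at every state $s_i$ with $\lnot s_i[b\rangle$ ``some place $p_{a,b}$ (with $a$ the letter immediately preceding the blocked occurrence)'' disables $b$. For the states \emph{immediately} before an $ab$ this is what~(\ref{halfinterval.eq}) gives you, but for states strictly between two $b$-blocks you still need an argument. The paper handles this by monotonicity: writing $w=\ldots b^{x_j}\cstate{s}\,u'_j\,\cstate{r}\,t\,b^{x_{j+1}}\ldots$ with $u_j=u'_j t$ containing no $b$, the marking of $p_{t,b}$ is non-decreasing from $s$ to $r$ (only $b$ removes tokens from it), so if $p_{t,b}$ disables $b$ at $r$ it also disables $b$ at every intermediate state. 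Without this observation your ESSP argument has a gap.

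Finally, your ``main obstacle'' is not one. Cycle closure $M_0[w\rangle M_0$ is automatic from the weight choices ($m\cdot\Parikh(w)(a)=n\cdot\Parikh(w)(b)$ for each place), and the absence of proper sub-cycles or spurious branches is exactly what the solved SSPs and ESSPs give you: once every pair of states is distinguished and every non-allowed transition is blocked, $\mathit{RG}(\mysyst)$ is forced to coincide with the target circular \lts{}. No separate primeness argument is needed beyond Lemma~\ref{SSP.lem}.
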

\begin{proof}
We have proved in Lemma~\ref{SSP.lem} that all SSPs are solvable for prime cycles. 
Let us consider the ESSPs at states $s$ as in $w = \ldots \cstate{s} a b \ldots$, i.e. $\lnot s[b\rangle$. 
Since we are looking for a WMG solution, all the sought places are of the form $p_{a,b}$ (see Lemma~\ref{adjacent.lem} 
and Fig.~\ref{genplaceab.fig}) with $m,n$ as in~(\ref{mn.eq}). 
To define the initial marking of $p_{a,b}$, let us put $n\cdot \Parikh(w) (b)$ tokens on it and fire the sequence $w$ once completely. 
Choose some state $s'$ in $w = \ldots \cstate{s'} \, a  \ldots$ 
such that the number $k$ of tokens on $p_{a,b}$ at state $s'$ is minimal 
(it may be the case that such an $s'$ is not unique; we can choose any such state).
Define $M_0(p_{a,b}) = n\cdot \Parikh(w) (b)-k$ as the initial marking of $p_{a,b}$.  
By construction, the firing of $w$ reproduces the markings of $p_{a,b}$ and $M_0$ guarantees their non-negativity. 
Let us show that the constructed place $p_{a,b}$ solves all the ESSPs $\lnot s[b\rangle$, where $w = \ldots \cstate{s} a b \ldots$.
Consider such a state $s$ in $w$ (w.l.o.g. we assume $s \neq s'$, since $s'$ certainly disables $b$). 
From $ w = u_1 \, \cstate{s'} a  \, \ldots \, \cstate{s} \, a  \, b \, u_2$ (circularly) 
and from inequality~(\ref{halfinterval.eq}) for $s_j = s'$ and $s_q=s$, 
we get $\Parikh_{s' s}(a)\cdot m - \Parikh_{s' s}(b) \cdot n < n$ since $\Parikh_{s' s}(a)>0$. 
Since $ M_{s'}(p_{a,b}) = 0$, 
$M_s(p_{a,b}) = M_{s'}(p_{a,b}) + \Parikh_{s' s}(a)\cdot m - \Parikh_{s' s}(b) \cdot n < n$, i.e., 
$p_{a,b}$ disables $b$ at state~$s$.

Now, we show that places of the form $p_{a,b}$ also solve the other ESSPs against $b$, 
i.e., at the states where $b$ is not the subsequent transition.
Sequence $w$ (up to rotation) 
can be written as $w = u_1\, b^{x_1} \, u_2 \, b^{x_2}\, \ldots \, u_l \, b^{x_l}$, $1 \le l \le \Parikh(w) (b)$, 
and for $1 \le i \le l$: $x_i >0$, $u_i \in (T\setminus \{b\})^+$. 
Transition $b$ has to be disabled at all the states between successive $b$-blocks. 
Consider an arbitrary pair of such blocks $b^{x_j}$ and $b^{x_{j+1}}$ in 
$w   \; =\;   \ldots\, b^{x_j}\, u_j \, b^{x_{j+1}} \, \ldots \; = \;   \ldots\, b^{x_j}\, \cstate{s}\, u'_j\,  \cstate{r} \, t \,  b^{x_{j+1}} \, \ldots $, 
with $u_j = u'_j t$. Place $p_{t,b}$ does not allow $b$ to fire at state $r$.
We have to check that $b$ is not enabled at any state between $s$ and $r$, i.e., it is not enabled `inside' $u'_j$.
If $u'_j$ is empty, then $s=r$, and we are done. Let $u'_j \neq \varepsilon$.
Due to $\Parikh(u'_j)(b) = \Parikh(u_j)(b) = 0$, the marking of place $p_{t,b}$ cannot decrease from $s$ to $r$, 
i.e., 
$M_{s}(p_{t,b}) \le M_{s''}(p_{t,b}) \le M_r(p_{t,b})$ for any $s''$ `inside' $u'_j$.
Since $p_{t,b}$ disables $b$ at $r$, it then disables $b$ at all states between $s$ and $r$, inclusively.
\end{proof}

\vspace*{\baselineskip}
From Lemma~\ref{nec.lem} and Lemma~\ref{suf.lem} we can deduce the following characterisation.

\begin{theorem}[A characterisation of cyclic WMG-solvability]\label{wmg-char.th}
A sequence $w \in T^\ast$ is cyclically solvable with a WMG iff 
$\Parikh(w) $ is prime and, for any pair of circularly adjacent labels in $w$, 
for instance $w = \ldots \, \cstate{s_q}\, ab\, \ldots$,
\begin{equation*}
	 \frac{m}{n} < \frac{\Parikh_{jq}(b)+1}{\Parikh_{jq}(a)} 
\end{equation*}
holds true with $m$, $n$ as in~(\ref{mn.eq}) for any $s_j$ such that $\Parikh_{jq}(a) \neq 0$. 
A WMG solution can be found with the places of the form $p_{a,b}$ for every such pair of $a$ and $b$. 
\end{theorem}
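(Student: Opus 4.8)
The plan is to obtain the characterisation by simply assembling the necessary condition of Lemma~\ref{nec.lem} and the sufficient condition of Lemma~\ref{suf.lem}, supplemented by a primeness argument drawn from the prime cycle property. Since both directions are already essentially packaged in the two preceding lemmas, the work of the proof is to line up their hypotheses with the statement and to verify that the quantifier over intermediate states matches.

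For the sufficiency direction ($\Leftarrow$) there is nothing genuinely new to do. The hypotheses of the theorem --- that $\Parikh(w)$ is prime and that inequality~(\ref{halfinterval.eq}) holds for every circularly adjacent pair $ab$ and every admissible $s_j$ --- are verbatim the hypotheses of Lemma~\ref{suf.lem}. I would therefore invoke that lemma directly, noting that the solution it constructs is already built from places $p_{a,b}$, one per circularly adjacent pair, which is exactly the special form claimed in the final sentence of the theorem.

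For the necessity direction ($\Rightarrow$) I would argue in two parts. First, primeness: if a WMG $\mysyst$ solves $w$ cyclically, then $\mathit{RG}(\mysyst)$ is isomorphic to the circular \lts{} induced by $w$, whose unique cycle has Parikh vector $\Parikh(w)$. Being the only cycle, it is small, and since the reachability graph of any WMG satisfies the prime cycle property (transported across the isomorphism, which preserves labels and hence Parikh vectors of cycles), this cycle must be prime; thus $\Parikh(w)$ is prime. Second, the inequality: by Lemma~\ref{adjacent.lem} I may assume the solving WMG has the special form with a single place $p_{a,b}$ for each circularly adjacent pair, and then Lemma~\ref{nec.lem} yields, for each such pair, the two-sided bound~(\ref{interval.eq}); its right-hand half is precisely~(\ref{halfinterval.eq}), so the theorem's condition follows.

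The only point demanding a little care --- and the part I expect to be the main (minor) obstacle --- is matching the quantification over states. Lemma~\ref{nec.lem} phrases the right inequality relative to a chosen earlier occurrence $s_r$ of $ab$ and an index $j$ lying between $r$ and $q$, whereas the theorem asks for the inequality at \emph{every} state $s_j$ with $\Parikh_{jq}(a)\neq 0$. I would close this gap by observing that the derivation of~(\ref{interval2.eq}) in the proof of Lemma~\ref{nec.lem} uses only the non-negativity $M_{s_j}(p_{a,b})\ge 0$, valid at every state, together with the single ESSP $\lnot s_q[b\rangle$; consequently $\Parikh_{jq}(a)\cdot m - \Parikh_{jq}(b)\cdot n = M_{s_q}(p_{a,b}) - M_{s_j}(p_{a,b}) < n$ holds for an arbitrary $s_j$, not merely for those lying between two occurrences of $ab$. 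With this observation the quantifiers align, the two lemmas combine to give the stated equivalence, and the WMG solution is realisable by the places $p_{a,b}$, as asserted.
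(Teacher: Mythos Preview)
Your proposal is correct and follows the same approach as the paper, which derives the theorem directly from Lemma~\ref{nec.lem} and Lemma~\ref{suf.lem} in a single sentence. You supply more detail than the paper does: the explicit primeness argument via the prime cycle property and the observation that the derivation of~(\ref{interval2.eq}) only needs $M_{s_j}(p_{a,b})\ge 0$ and $\lnot s_q[b\rangle$, so the right-hand inequality extends to arbitrary $s_j$ with $\Parikh_{jq}(a)\neq 0$; both points are sound and fill gaps the paper leaves implicit.
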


\subsection{A Polynomial-time Algorithm for Cyclic WMG-Solvability}\label{firstalgo.subsec} 

From the characterisation given by Theorem~\ref{wmg-char.th} and the considerations above, 
Algorithm~\ref{wmg.alg} below synthesises a cyclic WMG solution for a given sequence $w \in T^*$, if one exists.

The algorithm works as follows. 
Initially, the Parikh vector of the input sequence is calculated and checked for primeness in lines 2-3. 
If the Parikh vector is prime, we consecutively consider all the pairs of adjacent letters 
and examine the inequality from Theorem~\ref{wmg-char.th} for them. 
To achieve it, we take the first two letters in the sequence (lines 4-11), 
check if the inequality is satisfied for all the states (lines 12-18), 
and if so, construct a new place connecting the two letters under consideration (19-25).
Then, the sequence is cyclically rotated such that the initial letter goes to the end 
and the second letter becomes initial (line 8). 
In the new sequence, we take again the first two letters (lines 9-11) and repeat the procedure for them. 
The algorithm stops after a complete rotation of the initial sequence, and by this moment all the pairs of adjacent letters have been checked. 
The ordered alphabet is stored in the array $T$, and the sequence is stored in $v$.
We use variables $a$ and $b$ to store the letters under consideration in each step, $ia$ and $ib$ to store their indices in the alphabet, 
and $na$ and $nb$ are used for counting their occurrences during the check of the inequality from Theorem~\ref{wmg-char.th}. 
Variables $M$ and $Mmin$ are used to compute the initial marking of the sought place.\\

\begin{algorithm}[h] 
\caption{Synthesis of a WMG solving a cyclic word} 
\label{wmg.alg}
 \SetKwInOut{Input}{input}\SetKwInOut{Output}{output}
  
  \Input{$w \in T^n,\; T = \{t_0, \ldots , t_{m-1}\}$} 
  \Output{A WMG system $(N,M_0)$ cyclically solving $w$, if it exists}  
  var: $T[0 \; .. \, m-1] = (t_0, \ldots , t_{m-1})$, 
  $v[0 \, .. \, n -1]$, 
  $a$, $b$, $na$, $nb$, $ia$, $ib$, $M$, $Mmin$\;
  compute the Parikh vector $\Parikh[0\, . .\, m-1]$ of $w$\;
  \lIf(\tcp*[f]{Parikh-primeness}){ $\Parikh$ is not prime}{
  	\KwRet{unsolvable}  
  }
  $b \leftarrow w[0]$\; 
  \For(\tcp*[f]{index of $b$} ){$j = 0$ \KwTo $m -1$}{  
  	\lIf{$b = T[j]$}{
		$ib \leftarrow j$ 
	}  
  }
  \For{$i =0$ \KwTo $ n -1$ }{ 
  	$ v \leftarrow w[i] \ldots w[n -1] w[0] \ldots w[i-1]$
	\tcp*{rotation of $w$}
	$a \leftarrow b$, $b\leftarrow v[1]$, $ia \leftarrow ib$
	\tcp*{fix first adjacent pair}
	\For{$j=0$ \KwTo $m -1$}{
		\lIf{$b = T[j]$}{
			$ib \leftarrow j$
		}
	}
	$na \leftarrow 1$, $nb \leftarrow 1$ \;
	\If{$a \neq b$}{
		\For{$k=2$ \KwTo $ n -1$}{
			\If{$\frac{\Parikh[ib]}{\Parikh[ia]} \geq \frac{\Parikh[ib]-nb+1}{\Parikh[ia]-na}$ }{
				\KwRet{unsolvable} \tcp*{check solvability condition}
			}
			\lIf{$v[k] = T[ia]$}{
				$na \leftarrow na +1$
			}
			\lIf{$v[k] = T[ib]$}{
				$nb \leftarrow nb +1$
			}
		}
		$M\leftarrow \Parikh[ia] \cdot \Parikh[ib]$, $Mmin \leftarrow M$\;
		\For(\tcp*[f]{find initial marking}){$k =0$ \KwTo $ n -1$}{
			\lIf{$w[k]=a$}{
				$M\leftarrow M+ \Parikh[ib]$
			}
			\lIf{$w[k]=b$}{
				$M\leftarrow M- \Parikh[ia]$
			}
			\lIf{$M < Mmin$}{
				$Mmin \leftarrow M$
			}
		}
		add new place $p_{T[ia],T[ib]}$ to $N$ with\\
		$W(T[ia],p) = \Parikh[ib]$, $W(p,T[ib]) = \Parikh[ia]$, $M_0 = \Parikh[ia] \cdot \Parikh[ib] - Mmin$\;
	}
  }
  \KwRet{ $(N,M_0)$} 
\end{algorithm}

\noindent {\bf Polynomial-time complexity of Algorithm~\ref{wmg.alg}}. 
For a sequence of length $n$ over an alphabet with $m$ labels,
the Parikh vector can be computed in $\mathcal{O}(n)$ 
and its primeness can be checked using e.g. the Euclidean algorithm, with a running time in $\mathcal{O}(m\cdot \log_2^2 n)$. 
The main $\mathtt{for}$-cycle of Algorithm~\ref{wmg.alg} involves the enumeration of all pairs of distinct states of the cycle. 
For each pair of adjacent labels, a run of the $\mathtt{for}$-cycle consists of a lookup for an index in $\mathcal{O}(m)$, 
the verification of the inequality in $\mathcal{O}(n)$ and the construction of a place in $\mathcal{O}(n)$, which sums up to $\mathcal{O}(m+n)$. 
Thus, the main $\mathtt{for}$-loop requires a runtime in $\mathcal{O}(n(n+m))$.
Taking into account that $m \le n$, and that $n$ growths much faster than $\log_2^2 n$,  
the overall running time of the algorithm does not exceed $\mathcal{O}(n^2)$.\\

\noindent {\bf Complexity comparison: the known general approach is less efficient.} 
For a comparison, 
solving a cycle of length $n$ over $m$ labels with a WMG amounts to solve $n(n-1)$ SSPs and $n(m-1)$ ESSPs at most. 
In the special case of WMG synthesis from a prime cycle, we know that all the SSPs are solvable (Lemma~\ref{SSP.lem})
and that solving first the other problems avoids to consider the SSPs (see~\cite{DH2018}).
Since each of the sought places has at most one input and one output, 
each of the separation problems seeks for $3$ unknown variables, namely the initial marking of a place, the input and the output arc weights. 
For an ESSP, the output transition is clearly the one which has to be disabled and the input transition is to be found. 
So, there are $m-1$ possibilities to define a concrete ESSP, 
which in the worst case gives us up to $n(m-1)^2$ systems of inequalities to solve all the ESSPs.

The general region-based synthesis typically uses ILP-solvers,
and using e.g. Karmarkar's algorithm~\cite{Karmarkar84}  
(which is known to be efficient) for solving  an ILP-problem with $k$ unknowns, we expect a running time of 
$\mathcal{O}(k^{3.5}\cdot L^2 \cdot \log L \cdot \log \log L)$ where $L$ is the length of the input in bits. 
For the case of a cycle, the input of each separation problem is 
the matrix with the range of $(n+1) \times m$ and the vector of right sides with the range of $n+1$, 
where each component of the vector and of the matrix is a natural number not greater than $n$. 
Hence, the length of the input for a single separation problem can be estimated as $L = (m +1)\cdot (n+1)\cdot  \log_2 n$ bits,  
implying a runtime of 
$\mathcal{O}(n^2\cdot m^2 \cdot \log^2_2 n \cdot \log  L \cdot \log \log L)$ 
for solving a single separation problem (the number of unknowns being equal to 3, i.e. constant). 
Thus the general synthesis approach would need a runtime of $\mathcal{O}(n^3 \cdot m^4 \cdot LF)$ 
with the logarithmic factor 
$LF =\log_2^2 n \cdot \log ((m+1)\cdot (n+1)\cdot  \log_2 n) \cdot \log \log ((m+1)\cdot (n+1)\cdot  \log_2 n)$. 
Note that, with this general approach, some redundant places may be constructed, 
but they can be wiped out in a post-processing phase.

\subsection{CF-solvability vs WMG-solvability of Cycles}\label{nocf.subsec} 

Let us now relate cyclic WMG-solvability to cyclic CF-solvability. 

\begin{theorem}\label{cf=wmg.th}
A sequence $u \in \{ a,b,c\}^\ast $ is cyclically WMG-solvable iff $u$ is cyclically CF-solvable.
\end{theorem}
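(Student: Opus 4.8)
The plan is to prove the two implications separately. The direction \emph{WMG-solvable $\Rightarrow$ CF-solvable} is immediate: every WMG is a CF net by definition (a WMG additionally requires $|{}^\dt p| \le 1$, so WMGs form a subclass of CF nets, as recalled in the Petri net subclasses paragraph). Hence any WMG system cyclically solving $u$ is already a CF system cyclically solving $u$, and nothing more is needed. All the work lies in the converse direction, \emph{CF-solvable $\Rightarrow$ WMG-solvable}, for words over the three-letter alphabet $\{a,b,c\}$.

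For the hard direction, I would argue contrapositively or constructively via the characterisation in Theorem~\ref{wmg-char.th}. The strategy is: assume $u$ is cyclically CF-solvable; I want to show the conditions of Theorem~\ref{wmg-char.th} hold, namely that $\Parikh(u)$ is prime and that the inequality $\frac{m}{n} < \frac{\Parikh_{jq}(b)+1}{\Parikh_{jq}(a)}$ holds for every circularly adjacent pair $ab$ and every relevant intermediate state $s_j$. First, primeness: the reachability graph of any CF net satisfies the prime cycle property (this is recalled in Subsection~\ref{ssp.subsec}, citing \cite{BD-I-and-C,bds17}), and since the circular \lts{} induced by $u$ has a single small cycle whose Parikh vector is $\Parikh(u)$, CF-solvability forces $\Parikh(u)$ to be prime. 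The remaining task is to show that CF-solvability implies the adjacency inequalities of Theorem~\ref{wmg-char.th}. The idea is to take a CF solution, invoke Lemma~\ref{pure-CF.lem} to assume it is pure (the circular \lts{} is reversible, being strongly connected with the initial state reachable from every state), and then analyse the general pure CF place of Fig.~\ref{CF-place.fig}. For a ternary alphabet, any CF place disabling $b$ at a state preceding $ab$ has output $b$ and inputs among $\{a,c\}$; I would show that the presence of a $c$-input cannot help satisfy the separation constraint more tightly than a pure $p_{a,b}$ place would, so the WMG inequality is forced to hold anyway.

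The main obstacle I expect is precisely this last step: showing that in the three-letter case a CF place with \emph{two} inputs (say both $a$ and $c$ feeding the output $b$) cannot solve an ESSP $\lnot s[b\rangle$ that a pure two-input WMG place $p_{a,b}$ cannot. The crux is a counting/monotonicity argument on the token balance of such a place along the cyclic word: one must show that any constraint a composite place $p_{\{a,c\},b}$ enforces is dominated by one enforceable by a place with a single input adjacent to $b$. This is where the restriction to three letters is essential and where the paper notes the result \emph{fails} for four letters (the examples around Fig.~\ref{counterex.fig} with the word over $\{a,b,c,d\}$). Concretely, I would split on which of $a,c$ is circularly adjacent to $b$ in $u$: with only three letters, the cyclic structure limits how $c$-occurrences can interleave between successive $b$-blocks, and I would use the identity $\Parikh_{rq} = \Parikh_{rj} + \Parikh_{jq}$ together with primeness of $\Parikh(u)$ to reduce the two-input constraint to the single-input inequality~(\ref{halfinterval.eq}). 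Once that reduction is established, Theorem~\ref{wmg-char.th} (via Lemma~\ref{suf.lem}) produces an explicit WMG solution built from the places $p_{a,b}$, completing the proof.
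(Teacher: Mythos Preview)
Your high-level scaffolding matches the paper: the forward direction is trivial, primeness comes from the prime cycle property of CF nets, pureness comes from Lemma~\ref{pure-CF.lem}, and the hard direction goes through the characterisation of Theorem~\ref{wmg-char.th}. Where your plan diverges from the paper and becomes genuinely incomplete is the crucial step. You propose to show that for a place with output $b$ and inputs $\{a,c\}$, ``the presence of a $c$-input cannot help satisfy the separation constraint more tightly than a pure $p_{a,b}$ place would''. This domination claim is not what the paper proves, and it is not clear it is provable in that form: the paper's four-letter counterexamples show that extra inputs \emph{can} help, and nothing in your sketch isolates why three letters prevent this.

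The paper's argument is structurally different from a place-by-place domination. Assuming the inequality of Theorem~\ref{wmg-char.th} fails at some pair of states $j,q$ (for an adjacent pair, say $ba$), it first pushes $j$ leftwards as far as possible, forcing $r[a\rangle j$ for the predecessor $r$ of $j$; this turns the strict failure into $\frac{\Upsilon(a)}{\Upsilon(b)} \ge \frac{\Parikh_{rq}(a)}{\Parikh_{rq}(b)}$. It then analyses \emph{every} input place of $a$ in a putative CF solution: the cycle equation $k_b\Upsilon(b)+k_c\Upsilon(c)=k_a\Upsilon(a)$ together with $M_r(p)\ge k_a$ forces $M_q(p)\ge k_a$ unless $\frac{\Parikh_{rq}(c)}{\Upsilon(c)} < \frac{\Parikh_{rq}(a)}{\Upsilon(a)}$. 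Since \emph{some} input place of $a$ must disable $a$ at $q$, this inequality on the $c$-count is forced. The decisive move you are missing is then to pass to the \emph{complementary arc} of the cycle: because $\Parikh_{rq}+\Parikh_{qr}=\Upsilon$, the three ratios satisfy $\frac{\Parikh_{qr}(b)}{\Upsilon(b)}\le\frac{\Parikh_{qr}(a)}{\Upsilon(a)}<\frac{\Parikh_{qr}(c)}{\Upsilon(c)}$, and running the same computation for an arbitrary input place of $b$ along the path from $q$ back to $r$ yields $M_r(p)\ge k_b$, i.e.\ $r[b\rangle$, contradicting $r[a\rangle$. So the contradiction comes from the interplay between the input places of \emph{two different} transitions and the two complementary arcs of the cycle, not from reducing a single two-input place to a one-input place. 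Your plan, which stays on one side and looks only at places disabling $b$, does not yet contain this idea.
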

\begin{proof}
WMGs form a proper subclass of CF nets, hence the direct implication. 
Let now $TS = (S, T = \{a,b,c\}, \to, s_0)$ be a CF-solvable circular \lts{} obtained from $u$ and $\uniqueP = \Parikh(u)$. 
By contraposition, assume that $TS$ is not solvable by a WMG. 
Then, due to Theorem~\ref{wmg-char.th}, for some distinct states $j, q \in S$ and distinct labels $a,b \in T$
\begin{equation}\label{cf=wmg.eq} 
	\frac{\uniqueP(a)}{\uniqueP(b)} \ge \frac{\Parikh_{jq}(a)+1}{\Parikh_{jq}(b)}. 
\end{equation}
W.l.o.g. we can choose the leftmost $j$ satisfying~(\ref{cf=wmg.eq}). 
Then, in $TS$ we have $r[a\rangle j$ for some $r\in S$ preceding $j$. 
Indeed, if this is not the case and either $r[b\rangle j$ or $r[c \rangle j$, then~(\ref{cf=wmg.eq}) holds true for $r$ and $q$,  
contradicting the choice of $j$. 
On the other hand, since $\Parikh_{jq}(a) +1 = \Parikh_{rq}(a)$ and $\Parikh_{jq}(b) = \Parikh_{rq}(b)$, 
the inequality~(\ref{cf=wmg.eq}) implies 
\begin{equation}\label{cf=wmg1.eq}
	\frac{\uniqueP(a)}{\uniqueP(b)} \ge \frac{\Parikh_{rq}(a)}{\Parikh_{rq}(b)}.
\end{equation}

Consider a place $p$ which is an input place of $a$ in a cyclic CF solution of $u$.
From Lemma~\ref{pure-CF.lem}, we can assume pureness, i.e., 
the place has the form illustrated on the right of Fig.~\ref{cf=wmg.fig} with $x = a$, $y=b$, $z=c$. 
\begin{figure}[h] 
\centering
\begin{tikzpicture}[scale=0.69]
\begin{scope}[xshift=0cm,yshift=0cm]
\node[circle,fill=black!100,inner sep=0.07cm](01)at(180:1.45)[]{};
\node[circle,fill=black!100,inner sep=0.05cm](02)at(120:1.45)[label=below right:$q$]{};
\node[circle,fill=black!100,inner sep=0.05cm](03)at(60:1.45)[]{};
\node[circle,fill=black!100,inner sep=0.05cm](04)at(0:1.45)[]{};
\node[circle,fill=black!100,inner sep=0.05cm](05)at(300:1.45)[label=above left:$r$]{};
\node[circle,fill=black!100,inner sep=0.05cm](06)at(240:1.45)[label=above right:$j$]{};
\draw[-latex](-2.,0)to node[]{}(01);
\draw[dotted,-latex](01)to node[auto,above left]{$ $}(02);
\draw[-latex](02)to node[auto,above]{$b$}(03);
\draw[-latex](03)to node[auto,above right]{$a$}(04);
\draw[dotted,-latex](04)to node[auto,right]{$ $}(05);
\draw[-latex](05)to node[auto,below,pos=0.4]{$a$}(06);
\draw[dotted,-latex](06)to node[auto,below]{$ $}(01);
\end{scope}
\begin{scope}[xshift=4cm,yshift=0cm]
\node[place](p)at(2,0){\scriptsize $\mu_0$};
\node[below of=p,node distance=20pt]{$p$}; 
\node[transition] (a) at (4,0) {$x$}; 
\node[transition] (b) at (0,1) {$y$};
\node[transition] (c) at (0,-1) {$z$};
\draw(p)[]edge[-latex,out=0,in=180]node[below,swap,inner sep=4pt,pos=0.45]{$k_x$}(a);
\draw(b)[]edge[-latex]node[auto,inner sep=1pt,pos=0.45]{$k_y$}(p);
\draw(c)[]edge[-latex]node[below,inner sep=4pt,pos=0.65]{$k_z$}(p);
\end{scope}
\end{tikzpicture}
\vspace*{-1mm}
\caption{$u$ (left) is cyclically solvable with a CF net; a CF place over $\{x,y,z\}$ (right).}
\label{cf=wmg.fig}
\end{figure}
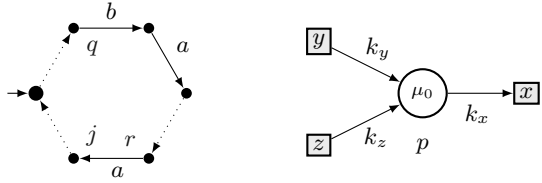
We must have the following constraints for $p$: 
\begin{equation}\label{cyclicalityA.eq}
\begin{aligned}
	 & \text{cycle }&: &\;\; k_b \cdot \uniqueP(b) + k_c \cdot \uniqueP(c) = k_a \cdot \uniqueP(a) \\
	 & r[a\rangle &: &\;\;  M_r(p) \ge k_a \\
	 & r[a \ldots\rangle q &: &\;\; M_q(p) = M_r(p) + k_b \cdot \Parikh_{rq}(b) + k_c \cdot \Parikh_{rq}(c) - k_a \cdot \Parikh_{rq}(a) .
\end{aligned}
\end{equation}

If $\Parikh_{rq}(c) \ge \Parikh_{rq}(a) \cdot \frac{\uniqueP(c)}{\uniqueP(a)}$, then due to (\ref{cf=wmg1.eq}) and (\ref{cyclicalityA.eq}),
\begin{equation*}
\begin{aligned} 
	M_q(p) & = & M_r(p) + k_b \cdot \Parikh_{rq}(b) + k_c \cdot \Parikh_{rq}(c) - k_a \cdot \Parikh_{rq}(a) \\ 
	            & \ge & k_a + \big( k_b \cdot \frac{\uniqueP(b)}{\uniqueP(a)} + k_c \cdot \frac{\uniqueP(c)}{\uniqueP(a)} - k_a \big)\cdot \Parikh_{rq}(a)  = k_a ,
\end{aligned}
\end{equation*}
implying $q [ a\rangle$ which contradicts $q[ b\rangle$. 
Hence, $\Parikh_{rq}(c) < \Parikh_{rq}(a) \cdot \frac{\uniqueP(c)}{\uniqueP(a)}$. 
Together with~(\ref{cf=wmg1.eq}), we have 
\[  
	\frac{\Parikh_{rq}(b)}{\uniqueP(b)} \ge \frac{\Parikh_{rq}(a)}{\uniqueP(a)}> \frac{\Parikh_{rq}(c)}{\uniqueP(c)}. 
\]
which is equivalent to 
\begin{equation}\label{dualcyc.eq}
	\frac{\Parikh_{qr}(b)}{\uniqueP(b)} \le \frac{\Parikh_{qr}(a)}{\uniqueP(a)}< \frac{\Parikh_{qr}(c)}{\uniqueP(c)} .
\end{equation}
For an arbitrary input place of $b$, hence of the form illustrated on the right of Fig.~\ref{cf=wmg.fig} with $x = b$, $y=a$, $z=c$, 
\begin{equation}\label{cyclicalityB.eq}
\begin{aligned}
	 & \text{cycle }&: &\;\; k_a \cdot \uniqueP(a) + k_c \cdot \uniqueP(c) = k_b \cdot \uniqueP(b) \\
	 & q[b\rangle &: &\;\;  M_q(p) \ge k_b \\
	 & q[b \ldots\rangle r &: &\;\; M_r(p) = M_q(p) + k_a \cdot \Parikh_{qr}(a) + k_c \cdot \Parikh_{qr}(c) - k_b \cdot \Parikh_{qr}(b) .
\end{aligned}
\end{equation}
Then, due to (\ref{dualcyc.eq})  and (\ref{cyclicalityB.eq}),
\begin{equation*}
\begin{aligned} 
	M_r(p) & = & M_q(p) + k_a \cdot \Parikh_{qr}(a) + k_c \cdot \Parikh_{qr}(c) - k_b \cdot \Parikh_{qr}(b) \\  
                    & \ge & k_b + \big( k_a \cdot \frac{\uniqueP(a)}{\uniqueP(b)} + k_c \cdot \frac{\uniqueP(c)}{\uniqueP(b)} - k_b \big) \cdot \Parikh_{qr}(b) = k_b
\end{aligned}
\end{equation*}
which implies $r [b\rangle$, a contradiction. Thus, $TS$ is solvable by some WMG.
\end{proof}

\vspace*{\baselineskip}
When the alphabet has more than three elements,
the inclusion of WMGs into CF nets is strict, 
i.e., there are cyclically CF-solvable sequences that are not cyclically WMG-solvable: 
the sequence $w = abcbad$ has a cyclic CF solution (cf. Fig.~\ref{cf-no-wmg.fig}); 
for $a\, \cstate{r}\, b\, c\, \cstate{q}\, b\, a\, d$ we have 
$ \frac{\Parikh(w)(a)}{\Parikh(w)(b)} = \frac{2}{2} \nless \frac{0+1}{1} = \frac{\Parikh_{rq}(a)+1}{\Parikh_{rq}(b)}$
which, by Theorem~\ref{wmg-char.th}, implies the cyclic unsolvability of $w$ by a WMG.

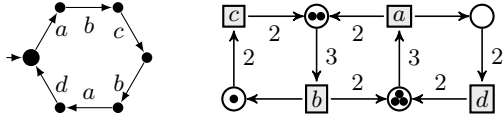
\begin{figure}[!ht]
\centering
\begin{tikzpicture}[scale=0.55]
\begin{scope}[xshift=0cm,yshift=0cm]
\node[circle,fill=black!100,inner sep=0.08cm](00)at(180:1.4)[]{};
\node[circle,fill=black!100,inner sep=0.05cm](01)at(120:1.4)[]{};
\node[circle,fill=black!100,inner sep=0.05cm](02)at(60:1.4)[]{};
\node[circle,fill=black!100,inner sep=0.05cm](03)at(0:1.4)[]{};
\node[circle,fill=black!100,inner sep=0.05cm](04)at(300:1.4)[]{};
\node[circle,fill=black!100,inner sep=0.05cm](05)at(240:1.4)[]{};

\draw[-latex](-2.,0)to node[]{}(00);
\draw[-latex](00)to node[auto,right]{$a$}(01);
\draw[-latex](01)to node[auto,below]{$b$}(02);
\draw[-latex](02)to node[auto,left]{$c$}(03);
\draw[-latex](03)to node[auto,left]{$b$}(04);
\draw[-latex](04)to node[auto,above]{$a$}(05);
\draw[-latex](05)to node[auto,right]{$d$}(00);
\end{scope}
\begin{scope}[xshift=3.5cm,yshift=-1.cm]
\node[place,tokens=1](p0)at(0.,0.)[]{$ $};
\node[place,tokens=2](p1)at(2.,2.)[]{$ $};
\node[place,tokens=3](p2)at(4.,0.)[]{$ $};
\node[place,tokens=0](p3)at(6.,2.)[]{$ $};
\node[transition](a)at(4.,2.){$a$}
	edge[post]node[below]{$2$}(p1)
	edge[post]node[above]{$ $}(p3)
	edge[pre]node[right]{$3$}(p2)
	;
\node[transition](b)at(2.,0.){$b$}
	edge[pre]node[right]{$3$}(p1)
	edge[post]node[above]{$2$}(p2)
	edge[post]node[below right]{$ $}(p0)
	;
\node[transition](c)at(0,2.){$c$}
	edge[post]node[below]{$2$}(p1)
	edge[pre]node[right]{$2$}(p0)
	;
\node[transition](d)at(6.,0.){$d$}
	edge[pre]node[right]{$2$}(p3)
	edge[post]node[above]{$2$}(p2)
	;
\end{scope}
\end{tikzpicture}
\caption{Sequence $abcbad$ is cyclically solved by the CF net on the right.}
\label{cf-no-wmg.fig}
\end{figure}

By Lemma~\ref{adjacent.lem}, using places only between adjacent transitions is sufficient for cyclic WMG-solvability. 
For the sequence $abcbad$ in Fig.~\ref{cf-no-wmg.fig}, $b$ follows $a$ and $c$, 
and the input place of $b$ in the CF solution is an output place for both $a$ and $c$. 
The situation is similar for $a$, which follows $b$ and $d$. 
However, this is not always the case when we are looking for a solution in the class of CF nets. 
For instance, the sequence $cabdaaab$ is cyclically solvable by a CF net (see Fig.~\ref{cf-no-adj.fig}). 
In this sequence, $b$ always follows $a$. 
But in order to solve ESSPs against $b$, we need an output place for $c$ (in addition to~$a$). 
Indeed, if there is a place $p_{a,b}$ as on the  right of Fig.~\ref{cf-no-adj.fig} 
which solves ESSPs against $b$, then 
for $c a\, \cstate{s} \,bdaa\, \cstate{q} \,ab$  we get 
\begin{equation*}
\begin{aligned}
	&s[b\rangle &: \,\,	&\mu_0 + 2 & \ge & ~4\\
	&\lnot q[b\rangle &: \,\, &\mu_0 + 3\cdot 2 - 4 &<& ~4\\
\end{aligned}
\end{equation*}
Subtracting the first inequality from the second one, we get $4-4 < 0$, a contradiction. 
Hence, $p_{a,b}$ cannot solve all ESSPs against $b$ in the cycle $cabdaaab$.

\begin{figure}[h]  
\centering
\begin{tikzpicture}[scale=0.55]
\begin{scope}[xshift=0cm,yshift=0cm]
\node[circle,fill=black!100,inner sep=0.08cm](00)at(180:1.45)[]{};
\node[circle,fill=black!100,inner sep=0.05cm](01)at(135:1.45)[]{};
\node[circle,fill=black!100,inner sep=0.05cm](02)at(90:1.45)[]{};
\node[circle,fill=black!100,inner sep=0.05cm](03)at(45:1.45)[]{};
\node[circle,fill=black!100,inner sep=0.05cm](04)at(0:1.45)[]{};
\node[circle,fill=black!100,inner sep=0.05cm](05)at(315:1.45)[]{};
\node[circle,fill=black!100,inner sep=0.05cm](06)at(270:1.45)[]{};
\node[circle,fill=black!100,inner sep=0.05cm](07)at(225:1.45)[]{};

\draw[-latex](-2.3,0)to node[]{}(00);
\draw[-latex](00)to node[auto,left]{$c$}(01);
\draw[-latex](01)to node[auto,above left]{$a$}(02);
\draw[-latex](02)to node[auto,above]{$b$}(03);
\draw[-latex](03)to node[auto,above right]{$d$}(04);
\draw[-latex](04)to node[auto,right]{$a$}(05);
\draw[-latex](05)to node[auto,below]{$a$}(06);
\draw[-latex](06)to node[auto,below]{$a$}(07);
\draw[-latex](07)to node[auto,left]{$b$}(00);
\end{scope}
\begin{scope}[xshift=3.5cm,yshift=-1.5cm]
\node[place,tokens=2](p0)at(1.5,1.5)[]{$ $};
\node[place,tokens=1](p1)at(0,2.8)[]{$ $};
\node[place,tokens=0](p3)at(4.5,2.8)[]{$ $};
\node[place,tokens=0](p4)at(4.5,0.2)[]{$ $};
\node[transition](a)at(4.5,1.5){$a$}
	edge[post]node[above]{$ $}(p4)
	edge[pre]node[above]{$ $}(p3)
	;
\node[transition](b)at(0.,0.2){$b$}
	edge[post]node[above left,inner sep=1pt]{$ $}(p1)
	edge[post]node[above left]{$ $}(p0)
	edge[pre]node[above,inner sep=2pt]{$3$}(p4)
	;
\node[transition](c)at(3,1.5){$c$}
	edge[post]node[above]{$ $}(p3)
	edge[pre]node[above]{$2$}(p0)
	edge[post]node[left]{$2$}(p4)
	;
\node[transition](d)at(2,2.8){$d$}
	edge[pre]node[below]{$2$}(p1)
	edge[post]node[below, near start]{$3$}(p3)
	;
\end{scope}
\begin{scope}[xshift=10cm,yshift=0cm]
\node[place,inner sep=1pt](p)at(2,0)[label=below:$p_{a,b}$]{\scriptsize $\mu_0$}; 
\node[transition](a)at(0,0){$a$};
\node[transition](b)at(4,0){$b$};
\draw(a)[]edge[-latex]node[above]{$2$}(p);
\draw(p)[]edge[-latex]node[above]{$4$}(b);
\end{scope}
\end{tikzpicture}
\vspace*{-1mm}
\caption{$cabdaaab$ is cyclically CF-solvable (middle), but is not cyclically WMG-solvable.} 
\label{cf-no-adj.fig}
\end{figure}
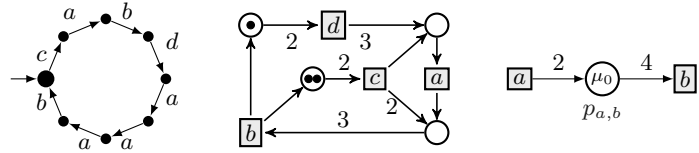

In a WMG, a place has at most one input. This restriction is relaxed for CF nets: multiple inputs are allowed. 
Let us show that a single input place for each transition is not always sufficient. 
For instance, consider the cyclically CF-solvable sequence $bcaf deaa abcd aafd caaa$ and Fig.~\ref{cf-big.fig}. 
Assume we can solve all ESSPs against transition $a$ with a single place $p$ as on the right of the same figure;  
due to Lemma~\ref{pure-CF.lem}, we do not need any side-condition. 
\begin{figure}[!ht] 
\centering
\begin{tikzpicture}[scale=0.6]
\begin{scope}[xshift=0cm,yshift=-2.8cm]
\node[place,tokens=0,inner sep=1pt](p0)at(-0.5,6.)[]{\scriptsize $7$};
\node[place,tokens=1](p1)at(9.,0.)[]{$ $};
\node[place,tokens=1](p2)at(9.,6.)[]{$ $};
\node[place,tokens=0](p3)at(3.2,3.)[]{$ $};
\node[place,tokens=0](p4)at(5,4.5)[]{$ $};
\node[place,tokens=2](p5)at(8,3)[]{$ $};
\node[place,tokens=2](p6)at(1.5,1.5)[]{$ $};
\node[place,tokens=3](p7)at(5,0.75)[]{$ $};
\node[place,tokens=1](p8)at(6.5,5.25)[]{$ $};
\node[place,tokens=0,inner sep=1pt](p9)at(1.5,4.5)[]{\scriptsize $10$};
\node[transition](a)at(1.5,6.){$a$}
	edge[post]node[below]{$2$}(p0)
	edge[post]node[left]{$2$}(p9)
	edge[pre]node[right]{$ $}(p2)
	edge[pre]node[right]{$ $}(p8)
	edge[pre]node[below left]{$4$}(p4)
	;
\node[transition](b)at(1.5,3.){$b$}
	edge[post]node[above,inner sep=2pt]{$3$}(p3)
	edge[post]node[left]{$3$}(p6)
	edge[pre]node[left,inner sep=2pt]{$9$}(p9)
	;
\node[transition](c)at(5,3){$c$}
	edge[pre]node[above]{$2$}(p3)
	edge[pre]node[below,near start]{$5$}(p6)
	edge[post]node[left]{$9$}(p4)
	edge[post]node[right,pos=0.4]{$3$}(p7)
	;
\node[transition](d)at(9.,1.5){$d$}
	edge[pre]node[left]{$2$}(p1)
	edge[pre]node[below]{$5$}(p7)
	edge[post]node[above,pos=0.35]{$3$}(p6)
	edge[post]node[above]{$ $}(p5)
	edge[post]node[left]{$3$}(p2)
	;
\node[transition](e)at(6.5,3.){$e$}
	edge[pre]node[above]{$3$}(p5)
	edge[post]node[above right,inner sep=1pt,pos=0.4]{$9$}(p4)
	edge[post]node[right]{$9$}(p8)
	;
\node[transition](f)at(-0.5,0.){$f$}
	edge[pre]node[right]{$9$}(p0)
	edge[post]node[above,inner sep=2pt,pos=0.5]{$3$}(p7)
	edge[post]node[below,inner sep=2pt]{$3$}(p1)
	;
\end{scope}
\begin{scope}[xshift=12cm,yshift=0cm]
\node[place,tokens=0](p)at(2,0)[label=below:$p$]{$\mu_0$};
\node[transition](b)at(0,2.5){$b$};
\node[transition](c)at(-0.7,1.5){$c$};
\node[transition](d)at(-1.3,0){$d$};
\node[transition](e)at(-0.7,-1.5){$e$};
\node[transition](f)at(0,-2.5){$f$};
\node[transition](a)at(4,0){$a$};
\draw(b)[]edge[-latex]node[right]{$k_b$}(p);
\draw(c)[]edge[-latex]node[above]{$k_c$}(p);
\draw(d)[]edge[-latex]node[above]{$k_d$}(p);
\draw(e)[]edge[-latex]node[above]{$k_e~~$}(p);
\draw(f)[]edge[-latex]node[ left]{$k_f$}(p);
\draw(p)[]edge[-latex,bend right=0]node[below]{$k$}(a);
\end{scope}
\end{tikzpicture}
\caption{$w=bcaf deaa abcd aafd caaa$ is cyclically solved by the CF net on the left;
a (pure) place of a CF net with $6$ transitions on the right.} 
\label{cf-big.fig}
\end{figure}
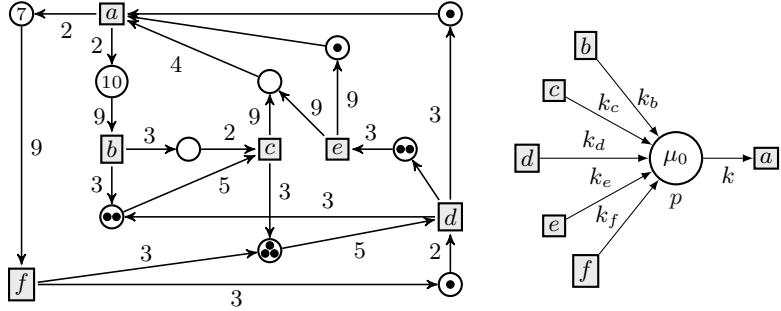
Then, for $p$ and 
$w=\cstate{s_0}\,b\, \,c\, \,a\, \,f\, \,d\,\cstate{s_5} \,e\,\cstate{s_6} \,a\, \,a\, \,a\, \,b\, \,c\,\cstate{s_{11}} \,d\,\cstate{s_{12}} \,a\, \,a\, \,f\, \,d\,\cstate{s_{16}} \, c\,\cstate{s_{17}} \,a\, \,a\, \,a$, 
the following system of inequalities must hold true: 
\begin{equation*}
\begin{aligned}
	& \text{cycle} &:\;\; & 2\cdot k_b + 3 \cdot k_c + 3 \cdot k_d + k_e + 2 \cdot k_f &=~& 9 \cdot k & (0) &\\
	& \lnot s_5[ a\rangle &: \;\; &  \mu_0 + k_b+ k_c+ k_d + k_f -k &<~& k & (1) &\\
	& s_6[aaa\rangle&: \;\; & \mu_0 + k_b+ k_c+ k_d + k_e + k_f -k &\ge~& 3\cdot k & (2) &\\
	& \lnot s_{11}[a\rangle &: \;\; & \mu_0 + 2\cdot k_b+ 2\cdot k_c+ k_d + k_e + k_f - 4\cdot k  &<~& k & (3) &\\
	& s_{12}[aa\rangle &: \;\; & \mu_0 + 2\cdot k_b+ 2\cdot k_c+ 2\cdot k_d + k_e + k_f - 4\cdot k  &\ge~& 2\cdot k& (4) &\\
	& \lnot s_{16}[a\rangle &: \;\; & \mu_0 + 2\cdot k_b+ 2\cdot k_c+ 3\cdot k_d + k_e + 2\cdot k_f - 6\cdot k  &<~& k & (5) &\\
	& s_{17}[aaa\rangle &: \;\; & \mu_0 + 2\cdot k_b+ 3\cdot k_c+ 3\cdot k_d + k_e + 2\cdot k_f - 6\cdot k &\ge~& 3\cdot k& (6) &\\
\end{aligned}
\end{equation*}

From the system above we obtain:
\begin{equation*}
\begin{aligned}
	& (2)-(1) & : \;\; & k_e & >~ & 2 \cdot k \\
	& (4)-(3) & : \;\; & k_d & >~ & k \\
	& (6)-(5) & : \;\; & k_c & >~ & 2 \cdot k \\
\end{aligned}
\end{equation*}
which implies $3\cdot k_c + 3\cdot k_d + k_e > 11 \cdot k$, contradicting the equality $(0)$. 
Hence, the ESSPs against $a$ cannot be solved by a single place.

\section{Weak Synthesis of WMGs in Polynomial-time}\label{WeakSynthesis}

For any given word $w$ over a set of labels $T$ whose support equals $T$,
each system $\mysyst=((P,T,W),M_0)$ that cyclically WMG-solves $w$, when it exists, 
has a unique minimal (hence prime) T-semiflow $\uniqueP$ with support $T$, since it is live 
(meaning that for each transition $t$, from each reachable marking $M$, a marking $M'$ is reachable from $M$ that enables $t$) 
and bounded (see~\cite{WTS92}).
In some situations, it might be sufficient to specify only the desired unique minimal T-semiflow, 
which leads to what we call a {\em weak synthesis} problem. 
Given such a prime Parikh vector $\uniqueP$, the aim is thus to construct a WMG cyclically solving 
an arbitrary sequence whose Parikh vector equals $\uniqueP$.
In this section, we provide a method for constructing a solution in polynomial-time.
To achieve it, we first need to recall known liveness conditions for WMGs and their circuit subclass.

\subsection{Previous results on liveness}

In~\cite{March09}, a polynomial-time sufficient condition of liveness is developed for the well-formed, 
strongly connected weighted event graphs (WEGs), equivalent to the well-formed, strongly connected WMGs.
Under these assumptions, each place has exactly one ingoing and one outgoing transitions.
Variants of this liveness condition for other classes of nets are given in~\cite{TECS14}, Theorems~4.2 and~5.5.\\

\noindent {\bf Additional notions.} We introduce the following notions for our purpose:\\
$-$ For any place $p$, $\gcd_p$ denotes the $\gcd$ of all input and output weights of $p$.\\
$-$ A marking $M_0$ satisfies the {\em useful tokens condition} if, for each place $p$, $M_0(p)$ is a multiple of $\gcd_p$.
Indeed, if $M_0(p) = k \cdot \gcd_p + r$ for some non-negative integers $k$ and $r$ such that $0 < r < \gcd_p$, 
then $r$ tokens are never used by any firing (see~\cite{March09,TECS14} for more details).\\
$-$ A net $(P,T,W)$ with incidence matrix $I$ is {\em conservative} 
if there is a P-vector $X \ge \oneP$ such that $X \cdot I = \zeroT$, 
where $\oneP$ denotes the vector of size $|P|$ in which each component has value $1$.
Such a P-vector $X$ is called a {\em conservativeness} vector.
The net is {\em $1$-conservative} if $\oneP$ is a conservativeness vector,
i.e. if for each transition, the sum of its input weights equals the sum of its output weights.\\
$-$ A net $N$ is {\em structurally bounded} if for each marking $M_0$, $(N,M_0)$ is bounded.\\
$-$ By Theorem 4.11 in~\cite{WTS92}, a live WTS $(N,M_0)$ is bounded iff $N$ is conservative.
We focus on live and bounded WMG solutions (which are WTS), hence on conservative, thus structurally bounded (see~\cite{LAT98}), solutions.\\
$-$ The scaling operation~(Definition 3.1 in \cite{TECS14}): 
The multiplication of all input and output weights of a place $p$ together with its marking by a positive rational number $\alpha_p$ 
is a {\em scaling} of the place $p$ if the resulting input and output weights and marking are integers. 
If each place $p$ of a system is scaled by a positive rational $\alpha_p$, the system is said to be scaled by the vector $\alpha$ 
whose components are the scaling factors $\alpha_p$.\\

Recalling Theorem~3.2 in~\cite{TECS14}, if $\mysyst = ((P, T, W), M_0)$ is a system 
and $\alpha$ is a vector of $|P|$ positive rational components, 
then scaling $\mysyst$ by $\alpha$ preserves the feasible sequences of firings.
We deduce from it, and from Theorem 3.5 in the same paper, the following result.

\begin{lemma}\label{SameEnabledPlaces}
Consider a system $\mysyst$, whose scaling by a vector $\alpha$ of positive rationals yields the system $\mysyst'$.
Then, for each feasible sequence $\sigma$, denote by $M$ the marking reached when firing $\sigma$ in $\mysyst$
and by $M'$ the marking reached when firing $\sigma$ in $\mysyst'$;
the set of places enabled by $M$ equals the set of places enabled by $M'$.
\end{lemma}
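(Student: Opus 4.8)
The plan is to reduce the claim to a single pointwise identity between the marking of each place in $\mysyst$ and in $\mysyst'$, namely that scaling acts multiplicatively on reachable markings. First I would invoke the cited Theorem~3.2 of~\cite{TECS14}, already recalled above, to the effect that scaling by $\alpha$ preserves the feasible sequences of firings. Hence $\sigma$ is firable in $\mysyst$ if and only if it is firable in $\mysyst'$, so that both $M$ and $M'$ are well defined and are reached along the same sequence $\sigma$. This lets me compare them term by term.

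Next I would establish the key identity $M'(p) = \alpha_p \cdot M(p)$ for every place $p$. This follows from the marking (state) equation: for any feasible $\sigma$ we have $M(p) = M_0(p) + \sum_{t \in T}\bigl(W(t,p) - W(p,t)\bigr)\cdot \Parikh(\sigma)(t)$, and by the definition of the scaling operation each of the quantities $M_0(p)$, $W(t,p)$ and $W(p,t)$ is multiplied by the same factor $\alpha_p$ in $\mysyst'$. Writing the analogous state equation for $M'(p)$ and factoring $\alpha_p$ out of every summand yields the identity at once. (Alternatively, this proportionality can be read directly off Theorem~3.5 of~\cite{TECS14}.)

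Finally, I would translate the identity into the enabledness statement, using the definition that a place $p$ is enabled by a marking if its token count is at least each of its output weights. A place $p$ is enabled by $M'$ iff $M'(p) \ge W'(p,t)$ for all $t \in T$, that is $\alpha_p \, M(p) \ge \alpha_p \, W(p,t)$; since $\alpha_p > 0$, this is equivalent to $M(p) \ge W(p,t)$ for all $t$, i.e. to $p$ being enabled by $M$. As $p$ is arbitrary, the set of places enabled by $M$ coincides with the set of places enabled by $M'$, which is the claim.

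The content here is essentially bookkeeping rather than a conceptual obstacle. The one point to handle with care is that the scaling of a place multiplies \emph{all three} of its data, the marking $M_0(p)$ and every incident weight $W(t,p)$ and $W(p,t)$, by the \emph{same} factor $\alpha_p$, so that $\alpha_p$ factors cleanly out of the state equation; and that keeping $\alpha_p > 0$ is what preserves the direction of the defining inequality when passing between $M$ and $M'$.
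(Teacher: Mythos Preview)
Your argument is correct and matches the paper's approach: the lemma is stated there as an immediate deduction from Theorem~3.2 and Theorem~3.5 of~\cite{TECS14}, without further detail. You have simply unfolded those two ingredients (preservation of feasible sequences and the proportionality $M'(p)=\alpha_p\,M(p)$) and combined them with positivity of $\alpha_p$, exactly as intended.
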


Thus, in conservative systems, we can reason equivalently on feasible sequences and enabled places
in the system scaled by a conservativeness vector, yielding a $1$-conservative system whose tokens amount remains constant.

Next result is a specialisation of Theorem 4.5 in~\cite{TECS14} to circuit Petri nets,
using the fact that the liveness of circuits is monotonic, i.e. preserved upon any addition of initial tokens 
(see \cite{WTS92,ACSD13,TECS14} and Theorem 7.10 in~\cite{HDFI18} with its proof).

\begin{proposition}[Sufficient condition of liveness~\cite{March09,TECS14,HDFI18}]\label{SuffCondCircuit}
Consider a conservative circuit system $\mysyst=(N,M_0)$, with $N=(P,T,W)$. 
$\mathcal{S}$ is live if the following conditions hold:\\
$-$ for a place $p_0$, with $\{t_0\} = p_0^\bullet$, $M_0(p_0) = W(p_0,t_0)$;\\
$-$ for every place $p$ in $P \setminus \{p_0\}$, with $p^\bullet = \{t\}$, $M_0(p) = W(p,t) - \gcd_p$.\\
Moreover, for every marking $M_0'$ such that $M_0' \ge M_0$, $(N,M_0')$ is live.
\end{proposition}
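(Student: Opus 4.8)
The plan is to exploit two facts already assembled in the excerpt: the scaling reduction to a $1$-conservative system (Lemma~\ref{SameEnabledPlaces} together with Theorem~3.2 of~\cite{TECS14}), and the general polynomial-time liveness condition of~\cite{March09,TECS14} for weighted event graphs, of which the net here is a single instance because it is one circuit.

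First I would normalise. Since $\mysyst=(N,M_0)$ is conservative, fix a conservativeness vector $X \ge \oneP$ and scale each place $p$ by $\alpha_p = X(p)$, obtaining a $1$-conservative circuit $\mysyst'$. By Theorem~3.2 of~\cite{TECS14} and Lemma~\ref{SameEnabledPlaces}, $\mysyst$ and $\mysyst'$ have exactly the same feasible firing sequences and, along them, the same enabled places; hence $\mysyst$ is live iff $\mysyst'$ is live, and it suffices to treat the $1$-conservative case. The marking hypotheses transport verbatim: scaling multiplies $W(p,t)$, $\gcd_p$ and $M_0(p)$ simultaneously by $\alpha_p$, so the equalities $M_0(p_0)=W(p_0,t_0)$ and $M_0(p)=W(p,t)-\gcd_p$ are preserved, the second being precisely the useful-tokens marking one token-unit below the output threshold of $p$.

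Next I would apply the general sufficient condition. In a circuit the net itself is the unique elementary circuit, so the condition of Theorem~4.5 in~\cite{TECS14} (equivalently the liveness criterion of~\cite{March09}) collapses to exactly the two marking equations of the statement: a single reference place $p_0$ carrying its full output weight, and every other place carrying its output weight minus $\gcd_p$. Liveness then follows from that criterion. As a sanity check of feasibility, one can run the enabling chain directly: $t_0$ is enabled because $M_0(p_0)=W(p_0,t_0)$, and once the transition ${}^\bullet p$ feeding a downstream place $p$ (with $p^\bullet=\{t\}$) fires, it deposits $W({}^\bullet p,p) \ge \gcd_p$ tokens into $p$, which compensates the single-unit deficit $M_0(p)=W(p,t)-\gcd_p$ and enables $t$; propagating this around the circuit shows that the reference token-load circulates, which is the behaviour the criterion certifies as live.

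Finally, the ``moreover'' clause is the monotonicity of circuit liveness: adding tokens to any place of a live circuit preserves liveness (see~\cite{WTS92,ACSD13,TECS14} and Theorem~7.10 of~\cite{HDFI18}). Since $M_0'\ge M_0$ is obtained from $M_0$ by a place-wise non-negative addition of tokens, $(N,M_0')$ is live as well. The main obstacle I anticipate is the hypothesis-matching in the third step: one must check carefully that the general condition of Theorem~4.5, stated for arbitrary weighted event graphs, really does specialise to these precise marking equations when the graph is a single circuit, and that the scaling of the second step leaves the $\gcd_p$-form of the marking intact — both routine but needing explicit verification.
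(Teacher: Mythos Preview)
Your proposal is correct and matches the paper's own treatment: the paper does not give a standalone proof but simply declares the proposition to be a specialisation of Theorem~4.5 in~\cite{TECS14} to a single circuit, together with the monotonicity of circuit liveness (\cite{WTS92,ACSD13,TECS14} and Theorem~7.10 of~\cite{HDFI18}) for the ``moreover'' clause---exactly the two ingredients you invoke. Your preliminary scaling step and the enabling-chain sanity check are extras the paper omits, but they are harmless and consistent with the cited derivation.
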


In the particular case of a binary circuit, i.e. with two transitions, 
we recall the next characterisation condition of liveness, given as Theorem~5.2 in~\cite{March09}.

\begin{proposition}[Liveness of binary $1$-conservative circuits~\cite{March09}]\label{LiveBinaryCircuit}
Consider a $1$-conservative binary circuit $\mysyst=((P,T,W),M_0)$ 
that fulfills the useful tokens condition,
with $T = \{a,b\}$ and
$P = \{p_{a,b},p_{b,a}\}$, where $p_{a,b}$ is the output of $a$ and $p_{b,a}$ is the output of $b$.
Let $m=W(a,p_{a,b})$ and $n=W(p_{a,b},b)$.
Then $\mathcal{S}$ is live iff $M_0(p_{a,b}) + M_0(p_{b,a}) > m + n - 2 \cdot \gcd(m,n)$. 
\end{proposition}

Now, consider any $1$-conservative binary circuit system $\mathcal{S}$ whose initial marking $M_0$ marks 
one place $p_{a,b}$ with its output weight $W(p_{a,b},b)$
and the other place $p_{b,a}$ with $W(p_{b,a},a) - \gcd_{p_{b,a}}$.
Each marking reachable from $M_0$ enables exactly one place;
applying Lemma~\ref{SameEnabledPlaces}, this is also the case for any scaling of $\mathcal{S}$, hence:

\begin{lemma}[One enabled place in binary circuits]\label{OneEnabledPlace}
Consider a conservative binary circuit system $\mysyst=(N,M_0)$, with $N=(P,T,W)$, $T=\{a,b\}$,
such that for the place $p_{a,b}$ with output $b$, $M_0(p_{a,b}) = W(p_{a,b},b)$ 
and 
for the other place $p_{b,a}$ with output $a$, $M_0(p_{b,a}) = W(p_{b,a},a) - \gcd_{p_{b,a}}$.
Then each reachable marking enables exactly one place.
\end{lemma}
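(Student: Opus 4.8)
The plan is to reduce the general conservative case to the $1$-conservative case by a scaling argument, and then to settle the $1$-conservative case by a direct token-counting argument that combines the invariance of the total marking with the divisibility of all reachable markings by the circuit's $\gcd$.

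First I would reduce to $1$-conservativeness. Since $\mysyst$ is conservative, it admits a conservativeness vector $X \ge \oneP$ with $X \cdot I = \zeroT$, which may be taken with positive integer components. Scaling each place $p$ by $\alpha_p = X(p)$ yields a system $\mysyst'$ whose incidence $I'$ satisfies $\oneP \cdot I' = X \cdot I = \zeroT$, so $\mysyst'$ is $1$-conservative. Scaling multiplies the output weight of $p_{a,b}$ and its marking by the same factor $\alpha_{p_{a,b}}$, and likewise multiplies $W(p_{b,a},a)$, $\gcd_{p_{b,a}}$ and $M_0(p_{b,a})$ by $\alpha_{p_{b,a}}$; hence the two marking hypotheses $M_0(p_{a,b}) = W(p_{a,b},b)$ and $M_0(p_{b,a}) = W(p_{b,a},a) - \gcd_{p_{b,a}}$ are preserved in $\mysyst'$. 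Since scaling preserves feasible sequences and, by Lemma~\ref{SameEnabledPlaces}, preserves the set of enabled places along every feasible sequence, it suffices to prove the statement for $\mysyst'$.

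Next I would dispatch the $1$-conservative case. Writing $m = W(a,p_{a,b})$ and $n = W(p_{a,b},b)$, $1$-conservativeness forces $W(p_{b,a},a) = m$ and $W(b,p_{b,a}) = n$, so both places have the same $\gcd$, namely $g := \gcd(m,n)$, and the marking hypotheses read $M_0(p_{a,b}) = n$, $M_0(p_{b,a}) = m - g$. The total $M(p_{a,b}) + M(p_{b,a})$ is invariant under firing (by $1$-conservativeness) and equals $m + n - g$. A place is enabled exactly when its own count meets its output weight, i.e.\ $p_{a,b}$ iff $M(p_{a,b}) \ge n$ and $p_{b,a}$ iff $M(p_{b,a}) \ge m$. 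If both were enabled, the total would be at least $m + n > m + n - g$, which is impossible. For the ``neither enabled'' case I would use that every reachable marking keeps each place's count a multiple of $g$ (the initial counts $n$ and $m - g$ are multiples of $g$, and each firing adds or removes $m$ or $n$, both multiples of $g$): then $M(p_{a,b}) < n$ forces $M(p_{a,b}) \le n - g$ and $M(p_{b,a}) < m$ forces $M(p_{b,a}) \le m - g$, giving a total at most $m + n - 2g < m + n - g$, again impossible. Hence exactly one place is enabled.

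I expect the only delicate point to be the reduction step: one must confirm that scaling carries both marking hypotheses over unchanged (on each side they are homogeneous of the same degree in $\alpha_p$) and that Lemma~\ref{SameEnabledPlaces} transfers ``exactly one enabled place'' backward, which hinges on the feasible sequences of $\mysyst$ and $\mysyst'$ coinciding. The $1$-conservative core is then routine; its entire force lies in squeezing out both the ``two enabled'' and the ``none enabled'' possibilities by pitting the conserved total $m + n - g$ against the $g$-divisibility of reachable markings.
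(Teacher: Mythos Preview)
Your proof is correct and follows essentially the same route as the paper: reduce between the general conservative case and the $1$-conservative case via scaling and Lemma~\ref{SameEnabledPlaces}, and settle the $1$-conservative case directly. The paper merely asserts the $1$-conservative claim and then observes that Lemma~\ref{SameEnabledPlaces} propagates it to all scalings; you go in the reverse direction (scale the given conservative system up to a $1$-conservative one) and, unlike the paper, actually write out the token-counting argument for the $1$-conservative core.
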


This lemma will help ensuring that the reachability graph of the synthesised WMG forms a circle.
Fig.~\ref{solcyc2duplicate3.fig} illustrates Lemma~\ref{OneEnabledPlace} and Proposition~\ref{SuffCondCircuit}.

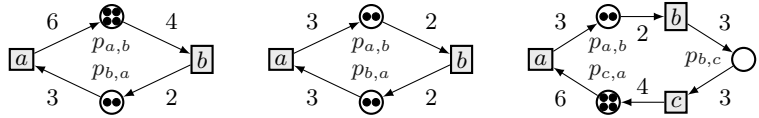
\begin{figure}[!ht]
\centering

\begin{tikzpicture}[scale=0.6]
\node[place,tokens=4](p)at(2,0.95)[label=below:$p_{a,b}$]{};
\node[place,tokens=2](p')at(2,-0.95)[label=above:$p_{b,a}$]{};
\node[transition](a)at(0,0){$a$};
\node[transition](b)at(4,0){$b$};
\draw(a)[]edge[-latex]node[above left]{$6$}(p);
\draw(p)[]edge[-latex]node[above right]{$4$}(b);
\draw(p')[]edge[-latex]node[below left]{$3$}(a);
\draw(b)[]edge[-latex]node[below right]{$2$}(p');
\end{tikzpicture}
\hspace*{4mm}
\raisebox{0mm}{
\begin{tikzpicture}[scale=0.6]
\node[place,tokens=2](p)at(2,0.95)[label=below:$p_{a,b}$]{};
\node[place,tokens=2](p')at(2,-0.95)[label=above:$p_{b,a}$]{};
\node[transition](a)at(0,0){$a$};
\node[transition](b)at(4,0){$b$};
\draw(a)[]edge[-latex]node[above left]{$3$}(p);
\draw(p)[]edge[-latex]node[above right]{$2$}(b);
\draw(p')[]edge[-latex]node[below left]{$3$}(a);
\draw(b)[]edge[-latex]node[below right]{$2$}(p');
\end{tikzpicture}
}
\hspace*{4mm}
\begin{tikzpicture}[scale=0.6]
\node[place,tokens=2](pab)at(1.5,0.95)[label=below:$p_{a,b}$]{};
\node[place,tokens=0](pbc)at(4.5,0)[label=left:$p_{b,c}$]{};
\node[place,tokens=4](pca)at(1.5,-0.95)[label=above:$p_{c,a}$]{};
\node[transition](a)at(0,0){$a$};
\node[transition](b)at(3,0.95){$b$};
\node[transition](c)at(3,-0.95){$c$};
\draw(a)[]edge[-latex]node[above left]{$3$}(pab);
\draw(pab)[]edge[-latex]node[below]{$2$}(b);
\draw(b)[]edge[-latex]node[above right]{$3$}(pbc);
\draw(pbc)[]edge[-latex]node[below right]{$3$}(c);
\draw(c)[]edge[-latex]node[above]{$4$}(pca);
\draw(pca)[]edge[-latex]node[below left]{$6$}(a);
\end{tikzpicture}
\vspace*{-2mm}
\caption{These circuits are conservative and fulfill the sufficient condition of liveness of Proposition~\ref{SuffCondCircuit}.
On the left, $\gcd_{p_{a,b}}=2$ and $\gcd_{p_{b,a}}=1$.
The system in the middle is obtained from the one on the left by scaling $p_{a,b}$ with $\frac{1}{2}$.
In this second system, $\gcd_{p_{a,b}}=1$ and $\gcd_{p_{b,a}}=1$.
On the right, $\gcd_{p_{a,b}}=1$, $\gcd_{p_{b,c}}=3$ and $\gcd_{p_{c,a}}=2$. 
}
\label{solcyc2duplicate3.fig} 
\end{figure}

We recall a liveness characterisation.
Since a place with an output and no input, called a {\em source-place}, prevents liveness, we assume there is no such place.

\begin{proposition}[Liveness of WMGs~\cite{WTS92}]\label{CharLiveWMG}
Consider a WMG $\mathcal{S}$ without source places.
Then $\mathcal{S}$ is live iff each circuit P-subsystem of $\mathcal{S}$ is live.
\end{proposition}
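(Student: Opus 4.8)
The plan is to prove both implications by exploiting the defining WMG property that every place has a unique input and a unique output transition, which makes each circuit a self-contained, conservative and consistent component. First I would record that, in a WMG without source places, a circuit P-subsystem $C$ is simply a directed cycle $p_1\,t_1\,p_2\,t_2\cdots p_k\,t_k$ with $t_i = p_i^\dt = {}^\dt p_{i+1}$, and that the subsystem generated by $\{p_1,\dots,p_k\}$ is \emph{exactly} this cycle, because in a WMG the only arcs incident to $p_i$ join it to $t_{i-1}$ and $t_i$. The crucial structural observation is that the token count of each $p_i$ is modified only by firings of $t_{i-1}$ and $t_i$, hence only by transitions of $C$. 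I would also recall that such a circuit is conservative and consistent, so firing its minimal T-semiflow reproduces the marking; combined with the monotonicity of circuit liveness recalled before Proposition~\ref{SuffCondCircuit}, this yields the key auxiliary fact: a circuit is live iff its minimal T-semiflow is firable from its initial marking (after which the marking is reproduced and the behaviour repeats forever, so every transition fires infinitely often and the initial state is a home state).

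For necessity (a live WMG has all its circuit P-subsystems live), I would use that the places of $C$ are affected only by the transitions of $C$. Thus, projecting any firing sequence $\sigma$ of $\mysyst$ onto $T_C = \{t_1,\dots,t_k\}$ yields a sequence that is firable in the isolated circuit $(C, M_0|_{P_C})$ and reaches the same marking on the $p_i$. If $\mysyst$ is live, then along a suitable run each $t_i$ fires infinitely often, so its projection realises every transition of $C$; hence the minimal T-semiflow of $C$ is firable from $M_0|_{P_C}$, and by the auxiliary fact $C$ is live.

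For sufficiency I would argue by contraposition, which is the hard direction. Assuming $\mysyst$ is not live, I would produce a non-live circuit by backward chaining. Pick a reachable marking $M$ and a transition $t$ that is dead at $M$. Some input place $p$ of $t$ then stays below $W(p,t)$ forever, so its \emph{unique} input transition ${}^\dt p$ can fire only finitely often, i.e. is itself dead from some later reachable marking. Iterating the step ``blocking input place $\mapsto$ its unique source transition'' and using that the net is finite, the chain of dead transitions must close into a circuit $C$ all of whose places remain permanently under-marked. I would then show that $(C, M_0|_{P_C})$ cannot fire its T-semiflow, i.e. is not live, contradicting the hypothesis.

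The main obstacle is precisely this sufficiency step: making the backward-chaining argument rigorous in the weighted setting. Unlike plain marked graphs, where a dead circuit is simply token-empty, here I must show that the circuit extracted from the chain of permanently under-marked places genuinely fails the firability of its T-semiflow. The WMG hypothesis is what keeps the chain well defined, since each dead transition has a single blocking input place with a single source; the delicate points are the passage from ``fires only finitely often'' to ``dead from a reachable marking'' and the final reduction of the starvation to non-liveness of $C$ via the auxiliary characterisation and monotonicity. By comparison, the necessity direction and the auxiliary circuit characterisation are routine.
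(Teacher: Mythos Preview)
The paper does not prove this proposition: it is recalled from~\cite{WTS92} and used as a black box, so there is no argument in the paper to compare your attempt against. Your plan is the classical one for this result in weighted T-systems.

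The necessity direction and the projection lemma are correct as stated. For sufficiency, the backward-chaining scheme is the right idea, but two steps need more care than you give them. First, ``some input place $p$ of $t$ then stays below $W(p,t)$ forever'' is not an immediate consequence of $t$ being dead: a priori, at each reachable marking \emph{some} input place is under-marked, but which one could vary. What saves you in a WMG is that each input place $p_i$ of $t$ has $t$ as its \emph{only} consumer, so once $t$ is dead the values $M(p_i)$ are non-decreasing along every run, and that WMGs are persistent (distinct transitions share no input place); hence if every $p_i$ could separately be raised to $W(p_i,t)$, all of them could simultaneously, enabling $t$. Equivalently, move first to a reachable marking $M^*$ with $\subseteq$-maximal dead set and argue there. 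Second, having closed a cycle of dead transitions in the full net is not yet enough: in the isolated circuit each $t_i$ has $p_i$ as its \emph{only} input, so $t_i$ could well be enabled in $C$ while being dead in $\mathcal{S}$ because of a place outside $C$. At each backward step you must therefore choose a place $p_i$ that is both permanently under-marked \emph{and} has a dead feeder; the monotonicity/persistence argument above delivers exactly such a $p_i$. Then $M^*|_{P_C}$ is a genuine deadlock of the isolated circuit, reachable from $M_0|_{P_C}$ by projection, and $(C,M_0|_{P_C})$ is not live. You correctly flagged both points as delicate; this is what is needed to close them.
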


\subsection{Weak synthesis of WMGs in polynomial-time}

Algorithm~\ref{algoWMGptime} below constructs a WMG from a given prime T-vector $\uniqueP$.
We prove it terminates and computes a WMG cyclically solving some word
with Parikh vector $\uniqueP$, hence performing weak synthesis.
We then show it lies in PTIME.
\begin{algorithm}[h]\label{algoWMGptime}
\KwData{A prime T-vector $\uniqueP$ with support $T=\{t_1, \ldots, t_m\}$.}
\KwResult{A WMG cyclically solving a word with Parikh vector $\uniqueP$.}

We construct first an unmarked WMG $N=(P,T,W)$ containing all possible binary circuits (which we call the complete WMG), as follows:\\
\For{each pair of distinct labels $t_i, t_j$ in $T$}{ 
Add two new places $p_{i,j}$ and $p_{j,i}$ forming a binary circuit P-subnet with set of labels $\{t_i,t_j\}$, 
such that:\\
$W(p_{i,j},t_j)=W(t_j,p_{j,i})=\frac{\uniqueP(t_i)}{\gcd(\uniqueP(t_i),\uniqueP(t_j))}$\\
$W(p_{j,i},t_i)=W(t_i,p_{i,j})=\frac{\uniqueP(t_j)}{\gcd(\uniqueP(t_i),\uniqueP(t_j))}$.\\
}
Then, we construct its initial marking $M_0$, visiting the transitions in increasing order, as follows:\\ 
\For{$i=2..m$}{
Mark each output place $p_{i,h}$ of $t_i$ that is an input of a transition $t_h$ of smaller index, i.e. $h < i$, 
with $M_0(p) = W(p_{i,h},t_h)$;\\
Mark each input place $p_{h,i}$ of $t_i$ that is an output of a transition $t_h$ of smaller index, i.e. $h < i$, 
with $M_0(p) = W(p_{h,i},t_i) - \gcd_{p_{h,i}} = W(p_{h,i},t_i) - 1$;
}

\Return{$(N,M_0)$}

\caption{Weak synthesis of a WMG with circular RG.}
\end{algorithm}

\begin{theorem}[Weak synthesis of a WMG]\label{CyclicWMGprime}
For every prime T-vector $\uniqueP$, 
Algorithm~\ref{algoWMGptime} terminates and computes a WMG cyclically solving $\uniqueP$,
i.e. 
cyclically solving some word $w \in T^\ast$ such that $\Parikh(w) = \uniqueP$. 
\end{theorem}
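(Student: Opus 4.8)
My plan is to establish three things in turn: termination, that the constructed system is conservative and live (hence bounded with the right T-semiflow), and that its reachability graph is a single circle realizing some word with Parikh vector $\uniqueP$. Termination is immediate: the first loop runs over the $\binom{m}{2}$ unordered pairs and the marking loop runs over $i=2,\ldots,m$, both finite, with only $\mathcal{O}(1)$ arithmetic per iteration. So the real content is in the behavioural correctness.

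For conservativeness, I would first observe that each binary circuit P-subnet $\{p_{i,j},p_{j,i}\}$ is, by the chosen weights, exactly $1$-conservative on its two transitions: the weights $W(t_i,p_{i,j})=\uniqueP(t_j)/g$ and $W(p_{j,i},t_i)=\uniqueP(t_j)/g$ (and symmetrically for $t_j$) are engineered so that $\uniqueP$ restricted to $\{t_i,t_j\}$ is a T-semiflow of that circuit. Hence $\uniqueP$ itself satisfies $I\cdot\uniqueP=\zero$ for the whole net $N$, making $\uniqueP$ a T-semiflow with support $T$; and since each place sits on a circuit, $N$ is conservative, so the system is structurally bounded. The key liveness argument is to apply Proposition~\ref{CharLiveWMG}: it suffices to show every circuit P-subsystem is live. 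Here I would use the fact that the complete WMG's circuits are built from the binary ones, and the chosen initial marking is precisely the one prescribed by Proposition~\ref{SuffCondCircuit} (one designated place marked at its output weight, every other place marked one $\gcd$ below its output weight). The indexing trick in the marking loop---marking the output place $p_{i,h}$ of the higher-indexed transition fully and the input place $p_{h,i}$ one below---is exactly what designates, for each binary circuit, a distinguished ``full'' place and a ``deficient'' one, so Proposition~\ref{SuffCondCircuit} applies to each binary circuit. For the larger circuits of the complete WMG I would argue that the marking dominates the liveness threshold, invoking the monotonicity of circuit liveness stated before Proposition~\ref{SuffCondCircuit}.

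The most delicate step---and the main obstacle---is proving that the reachability graph is a \emph{single cycle} rather than merely live and bounded. This is where Lemma~\ref{OneEnabledPlace} does the essential work: in each binary circuit, under precisely the marking the algorithm assigns, every reachable marking enables exactly one of the two places, i.e. the circuit's reachability graph is itself a circle. I would then argue that, because the complete WMG is the superposition of these binary circuits all sharing the transitions, and each such circuit by itself already forces a circular firing order on its two transitions, the global reachable state space cannot branch: at any reachable marking, the enabled transitions are constrained simultaneously by all binary circuits touching them, and one shows there is exactly one enabled transition at each state, so the reachability graph is a simple cycle. The length and the specific word $w$ then emerge as the unique firing sequence returning to $M_0$, whose Parikh vector must be the minimal T-semiflow $\uniqueP$ by liveness, boundedness, and the uniqueness of the minimal T-semiflow recalled at the start of Section~\ref{WeakSynthesis}.

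Putting these together, $(N,M_0)$ is a live and bounded WMG whose reachability graph is a single circle, and firing that circle once realizes a word $w$ with $\Parikh(w)=\uniqueP$; thus the algorithm performs weak synthesis as claimed. I expect the ``exactly one enabled transition everywhere'' argument to require the most care, since one must rule out that two distinct transitions become simultaneously enabled at some reachable marking despite each individual binary circuit being circular---this is the step where I would lean hardest on Lemma~\ref{OneEnabledPlace} combined with the conservativeness-driven equivalence of enabled places under scaling from Lemma~\ref{SameEnabledPlaces}.
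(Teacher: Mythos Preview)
Your plan is correct and matches the paper's proof almost step for step: show $I\cdot\uniqueP=\zero$ and conservativeness, obtain liveness of every circuit P-subsystem so that Proposition~\ref{CharLiveWMG} gives liveness of the whole WMG, then exclude two simultaneously enabled transitions $t_i,t_j$ by projecting the reachable marking onto the binary circuit $\{p_{i,j},p_{j,i}\}$ and invoking Lemma~\ref{OneEnabledPlace}; finally use the known fact that a live bounded WMG admits a firing sequence with Parikh vector equal to its minimal T-semiflow. The paper organises the liveness part as an induction on the marking-loop index~$\ell$, while you argue it directly, but the substance is the same.

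The one spot where your sketch is thinner than the paper is the liveness of circuits with three or more places. ``The marking dominates the liveness threshold, invoking monotonicity'' presupposes that every such circuit contains at least one place marked with its full output weight---and you do not say why. The missing observation is a simple ordering argument: a directed cycle $t_{a_1}\to p_{a_1,a_2}\to t_{a_2}\to\cdots\to t_{a_k}\to p_{a_k,a_1}\to t_{a_1}$ cannot have $a_1<a_2<\cdots<a_k<a_1$, so some edge goes from a higher index to a lower one, i.e.\ the circuit contains a place $p_{i,h}$ with $i>h$, which the algorithm marks with $W(p_{i,h},t_h)$. Since every other place $p_{u,v}$ carries $W(p_{u,v},t_v)-1=W(p_{u,v},t_v)-\gcd_{p_{u,v}}$, the hypothesis of Proposition~\ref{SuffCondCircuit} is met exactly, and no separate monotonicity appeal is needed. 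With that sentence added, your argument is complete and essentially identical to the paper's.
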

\begin{proof}
The proof is illustrated in Fig.~\ref{proofInductionCyclicWMGsingleAndBinary.fig}, \ref{proofInductionCyclicWMG.fig} and~\ref{proofInductionCyclicWMGbis.fig}.
Consider any prime T-vector $\uniqueP \in (\mathbb{N} \setminus \{0\})^m$, where $m$ is the number of transitions.
In the first loop, we consider each pair of transitions once.
In the second loop, we consider each place once. Thus, the algorithm terminates.
Let us prove its correction.

If $|T| = 1$, there is one transition and $\uniqueP=(1)$: the WMG with $T = \{t_1\}$, $P = \emptyset$
and the sequence $w=t_1$ fulfill the claim.
Hence, we suppose $|T| \ge 2$.

For each place $p_{i,j}$, we have $\uniqueP(t_j) \cdot W(p_{i,j},t_j) = \uniqueP(t_i) \cdot W(t_i, p_{i,j})$,
so that
$- W(p_{i,j},t_j) \cdot \uniqueP(t_j) + W(t_i, p_{i,j}) \cdot \uniqueP(t_i) = 0$,
hence
$I \cdot \uniqueP = 0$, where $I$ is the incidence matrix of $N$.
Moreover, each circuit P-subnet of $N$ is conservative (by Corollary 3.6 in~\cite{WTS92}).

Now, let us consider the second loop: we prove the next invariant $\mathrm{Inv}(\ell)$ to be true at the end of each iteration $\ell$,
for each $\ell=1..m-1$, by induction on $\ell$:

\noindent $\mathrm{Inv}(\ell)$: 
"At the end of the $\ell$-th iteration, the WMG P-subsystem $\mathcal{S}_{\ell}$ 
defined by the set of places $P_\ell = \{p_{u,v} \mid u,v \in \{1, \ldots, \ell+1\}, u \neq v\}$ is live,
and each binary circuit P-subsystem of $\mathcal{S}_{\ell}$ has exactly one enabled place".

Before entering the loop, i.e. before the first iteration, the WMG in unmarked.

Base case: $\ell=1$. At the end of the first iteration, $P_\ell = P_1 = \{p_{1,2}, p_{2,1}\}$, 
which induces a live binary circuit (by Proposition~\ref{SuffCondCircuit})
with exactly one enabled place, since only one output of $t_2$ is enabled by $M_0$
and the other place is an output of $t_1$ considered in the second part of the loop. 

Inductive case: $1 < \ell \le m-1$. We suppose $\mathrm{Inv}(\ell-1)$ to be true, and we prove that $\mathrm{Inv}(\ell)$ is true.
Thus, at the end of iteration $\ell-1$, we suppose that the P-subsystem $\mathcal{S}_{\ell-1}$ induced by $P_{\ell-1}$ is live,
and that each binary circuit P-subsystem of $\mathcal{S}_{\ell-1}$ has exactly one enabled place.
The iteration $\ell$ marks only all the input and output places of $t_{\ell+1}$ 
that are inputs or outputs of transitions in $\{t_1, \ldots, t_\ell\}$.
None of these places has been considered in any previous iteration, 
since each iteration considers only places connected to transitions of smaller index.
Thus, these places are newly marked at iteration $\ell$,
and the only places unmarked at the end of this iteration are connected to transitions of higher index.

We deduce that, at the end of iteration $\ell$:\\
$-$ each binary circuit of $\mathcal{S}_\ell$ has exactly one enabled place:
indeed, each such binary circuit either belongs to $\mathcal{S}_{\ell-1}$, on which the inductive hypothesis applies,
or to the circuits newly marked at iteration $\ell$;\\
$-$ each circuit P-subsystem of $\mathcal{S}_\ell$ with three places or more is live:
indeed, consider any such conservative circuit $C$; 
either $C$ is a P-subsystem of $\mathcal{S}_{\ell-1}$, which is live by the inductive hypothesis,
hence Proposition~\ref{CharLiveWMG} applies and $C$ is live,
or $C$ contains transition $t_{\ell+1}$,
in which case $C$ contains necessarily an output $p_{\ell+1,h}$ of $t_{\ell+1}$ with $h < \ell+1$: 
since each place $p_{u,v}$ of $\mathcal{S}_\ell$ is marked with at least $W(p_{u,v},t_v) - \gcd_{p_{u,v}}$
and
$p_{\ell+1,h}$ is marked with $W(p_{\ell+1,h},t_h)$, 
$C$ fulfills the sufficient condition of liveness of Proposition~\ref{SuffCondCircuit}, hence is live;
we deduce that each circuit P-subsystem is live, hence $\mathcal{S}_\ell$ is live by Proposition~\ref{CharLiveWMG}. 

We proved that $\mathrm{Inv}(\ell)$ is true for every integer $\ell=1..m-1$.
We deduce that the WMG system $\mathcal{S}_{m-1}=(N,M_0)$ obtained at the end of the last iteration, which is the system returned,
fulfills $\mathrm{Inv}(m-1)$.
Suppose that some marking $M$ reachable in $\mathcal{S}_{n-1}$ enables two distinct transitions $t_i$ and $t_j$.
Since $\mathcal{S}_{n-1}$ is a complete WMG, 
there is a binary circuit P-subsystem $C_{i,j}=(N_{i,j},M_{i,j})$ containing $t_i$ and $t_j$,
in which exactly one place is enabled, applying Lemma~\ref{OneEnabledPlace} 
(since $\projection{M}{\{p_{i,j},p_{j,i}\}} = M_{i,j}$ is a marking reachable in $(N_{i,j},\projection{M_0}{\{p_{i,j},p_{j,i}\}})$).
We deduce that $M$ cannot enable both $t_i$ and $t_j$, a contradiction.

Thus the WMG returned is live and each of its reachable markings enables exactly one transition.
It is known that, in each live and bounded WMG, a sequence $\sigma$ is feasible 
such that $\Parikh(\sigma) = \uniqueP$ which is the unique minimal T-semiflow of the WMG (see \cite{WTS92,tcs97}).
Consequently, its reachability graph is a circle, i.e. the WMG solves $\uniqueP$ (and $\sigma$) cyclically.
We get the claim.
\end{proof}

\begin{figure}[!ht]
\centering

\raisebox{5.8mm}{
\begin{tikzpicture}[scale=0.6]
\node[transition](t1)at(0,0){$t_1$};
\end{tikzpicture}
}
\hspace*{1cm}
\begin{tikzpicture}[scale=0.6]
\node[place,fill=black!15](p12)at(2,0.95)[label=below:$p_{1,2}$]{};
\node[place,fill=black](p21)at(2,-0.95)[label=above:$p_{2,1}$]{};
\node[transition](t1)at(0,0){$t_1$};
\node[transition](t2)at(4,0){$t_2$};
\draw(t1)[]edge[-latex,bend left=0]node[above left]{$w_{1,2}$}(p12);
\draw(p12)[]edge[-latex,bend left=0]node[above right]{$w_{2,1}$}(t2);
\draw(t2)[]edge[-latex]node[below right]{$w_{2,1}$}(p21);
\draw(p21)[]edge[-latex]node[below left]{$w_{1,2}$}(t1);
\end{tikzpicture}
\hspace*{1cm}
\begin{tikzpicture}[scale=0.6]
\node[place,tokens=2](p12)at(2,0.95)[label=below:$p_{1,2}$]{};
\node[place,tokens=2](p21)at(2,-0.95)[label=above:$p_{2,1}$]{};
\node[transition](t1)at(0,0){$t_1$};
\node[transition](t2)at(4,0){$t_2$};
\draw(t1)[]edge[-latex,bend left=0]node[above left]{$2$}(p12);
\draw(p12)[]edge[-latex,bend left=0]node[above right]{$3$}(t2);
\draw(t2)[]edge[-latex]node[below right]{$3$}(p21);
\draw(p21)[]edge[-latex]node[below left]{$2$}(t1);
\end{tikzpicture}
\vspace*{-1mm}
\caption{Sketching Theorem~\ref{CyclicWMGprime} for $1$ and $2$ transitions.
On the left, the circuit system $\mathcal{S}_1$ has no place and solves $\uniqueP=(1)$.
In the circuit system $\mathcal{S}_2$ in the middle, the output of $t_2$ is marked as black and its input as grey.
On the right, an instanciation of the binary case, given $\uniqueP=(3,2)$.
These systems are live, $RG(\mathcal{S}_1)$ and $RG(\mathcal{S}_2)$ are circles.
}
\label{proofInductionCyclicWMGsingleAndBinary.fig}
\end{figure}
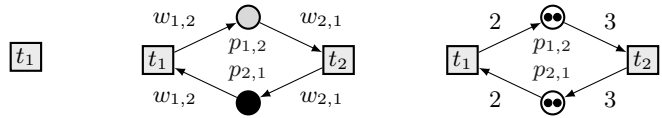

\begin{figure}[!ht]
\centering

\begin{tikzpicture}[scale=0.85]
\node[place,fill=black!15](p12)at(2,2.1)[label=below:$p_{1,2}$]{};
\node[place,fill=black](p21)at(2,1.1)[label=below:$p_{2,1}$]{};
\node[place,fill=black!15](p13)at(0,-2.3)[label=below:$p_{1,3}$]{};
\node[place,fill=black](p31)at(-0.9,-2.9)[label=below:$p_{3,1}$]{};
\node[place,fill=black!15](p23)at(4.9,-2.9)[label=below:$p_{2,3}$]{};
\node[place,fill=black](p32)at(4,-2.3)[label=below:$p_{3,2}$]{};
\node[transition](t1)at(0,0){$t_1$};
\node[transition](t2)at(4,0){$t_2$};
\node[transition](t3)at(2,-3){$t_3$};
\draw(t1)[]edge[-latex,bend left=20]node[above left]{$w_{1,2}$}(p12);
\draw(p12)[]edge[-latex,bend left=20]node[above right]{$w_{2,1}$}(t2);
\draw(t2)[]edge[-latex]node[above]{$w_{2,1}$}(p21);
\draw(p21)[]edge[-latex]node[above]{$w_{1,2}$}(t1);
\draw(t1)[]edge[-latex]node[below left,near end]{$w_{1,3}$}(p13);
\draw(p13)[]edge[-latex]node[below]{$w_{3,1}$}(t3);
\draw(t3)[]edge[-latex]node[below]{$w_{3,2}$}(p32);
\draw(p32)[]edge[-latex]node[below right,near start]{$w_{2,3}$}(t2);
\draw(t3)[]edge[-latex,bend left=20]node[below]{$w_{3,1}$}(p31);
\draw(p31)[]edge[-latex,bend left=20]node[above left, very near end]{$w_{1,3}$}(t1);
\draw(t2)[]edge[-latex,bend left=20]node[above right, very near start]{$w_{2,3}$}(p23);
\draw(p23)[]edge[-latex,bend left=20]node[below]{$w_{3,2}$}(t3);
\end{tikzpicture}
\hspace*{3mm}
\begin{tikzpicture}[scale=0.85]
\node[place,fill=black!15](p12)at(2,2.1)[label=below:$p_{1,2}$]{};
\node[place,fill=black](p21)at(2,1.1)[label=right:$p_{2,1}$]{};
\node[place,fill=black!15](p13)at(0,-2.3)[label=below:$p_{1,3}$]{};
\node[place,fill=black](p31)at(-0.9,-2.9)[label=below:$p_{3,1}$]{};
\node[place,fill=black!15](p23)at(4.9,-2.9)[label=below:$p_{2,3}$]{};
\node[place,fill=black](p32)at(4,-2.3)[label=below:$p_{3,2}$]{};
\node[place,fill=black!15](p14)at(0.3,-1.2)[label=below:~$p_{1,4}$]{};
\node[place,fill=black](p41)at(1.65,0)[label=above:$p_{4,1}$~]{};
\node[place,fill=black!15](p24)at(3.7,-1.2)[label=below:$p_{2,4}$]{};
\node[place,fill=black](p42)at(2.35,0)[label=above:$p_{4,2}$]{};
\node[place,fill=black!15](p34)at(2.4,-2.1)[label=right:$p_{3,4}$]{};
\node[place,fill=black](p43)at(1.6,-2.1)[label=left:$p_{4,3}$]{};
\node[transition](t1)at(0,0){$t_1$};
\node[transition](t2)at(4,0){$t_2$};
\node[transition](t3)at(2,-3){$t_3$};
\node[transition](t4)at(2,-1.2){$t_4$};
\draw(t1)[]edge[-latex,bend left=25]node[above left]{$w_{1,2}$}(p12);
\draw(p12)[]edge[-latex,bend left=25]node[above right]{$w_{2,1}$}(t2);
\draw(t2)[]edge[-latex]node[above,near start]{$w_{2,1}$}(p21);
\draw(p21)[]edge[-latex]node[above,near end]{$w_{1,2}$}(t1);
\draw(t1)[]edge[-latex]node[below left,near end]{$w_{1,3}$}(p13);
\draw(p13)[]edge[-latex]node[below]{$w_{3,1}$}(t3);
\draw(t3)[]edge[-latex]node[below]{$w_{3,2}$}(p32);
\draw(p32)[]edge[-latex]node[below right,near start]{$w_{2,3}$}(t2);
\draw(t3)[]edge[-latex,bend left=20]node[below]{$w_{3,1}$}(p31);
\draw(p31)[]edge[-latex,bend left=20]node[above left, very near end]{$w_{1,3}$}(t1);
\draw(t2)[]edge[-latex,bend left=20]node[above right, very near start]{$w_{2,3}$}(p23);
\draw(p23)[]edge[-latex,bend left=20]node[below]{$w_{3,2}$}(t3);
\draw(t3)[]edge[-latex,bend left=0]node[right]{$w_{3,4}$}(p34);
\draw(p34)[]edge[-latex,bend left=0]node[right]{$w_{4,3}$}(t4);
\draw(t4)[]edge[-latex,bend left=0]node[left]{$w_{4,3}$}(p43);
\draw(p43)[]edge[-latex,bend left=0]node[left]{$w_{3,4}$}(t3);
\draw(t4)[]edge[-latex,bend left=0]node[left]{$w_{4,1}$}(p41);
\draw(p41)[]edge[-latex,bend left=0]node[below]{$w_{1,4}$}(t1);
\draw(t4)[]edge[-latex,bend left=0]node[right]{$w_{4,2}$}(p42);
\draw(p42)[]edge[-latex,bend left=0]node[below]{$w_{2,4}$}(t2);
\draw(t1)[]edge[-latex,bend left=0]node[right]{$w_{1,4}$}(p14);
\draw(p14)[]edge[-latex,bend left=0]node[above]{$w_{4,1}$}(t4);
\draw(t2)[]edge[-latex,bend left=0]node[left]{$w_{2,4}$}(p24);
\draw(p24)[]edge[-latex,bend left=0]node[above]{$w_{4,2}$}(t4);
\end{tikzpicture}
\vspace*{-2mm}
\caption{Sketching Theorem~\ref{CyclicWMGprime} for $3$ and $4$ transitions (systems $\mathcal{S}_3$, $\mathcal{S}_4$).
Each black place $p_{i,j}$ is marked with $W(p_{i,j},t_j)$, each grey place $p_{i,j}$ is marked with $W(p_{i,j},t_j) - \gcd_{p_{i,j}}$.
On the left, in the circuit induced by $\{p_{1,2},p_{2,1}\}$, the output of $t_2$ is black and its input is grey.
Then, each output of $t_3$ is black, each input is grey.
Each circuit of $\mysyst_3$ is live, $\mysyst_3$ is live and $RG(\mysyst_3)$ is a circle.
In $\mysyst_4$, we keep the marking of $\mysyst_3$ and mark each output of $t_4$ as black, each of its inputs as grey.
Thus, $\mysyst_4$ is live and $RG(\mysyst_4)$ is a circle.
}
\label{proofInductionCyclicWMG.fig}
\end{figure}
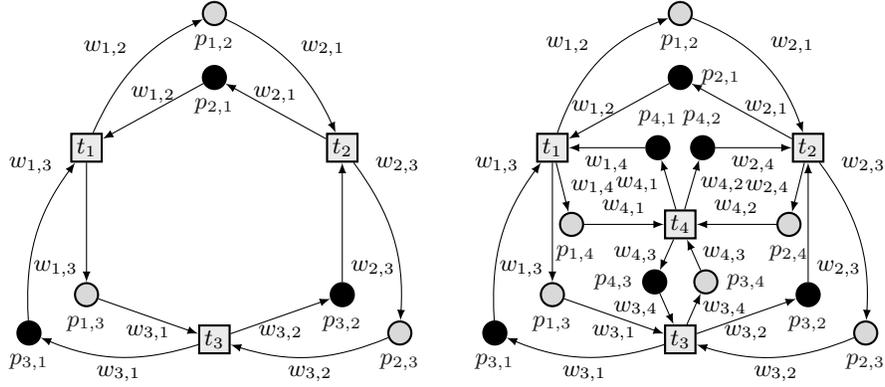

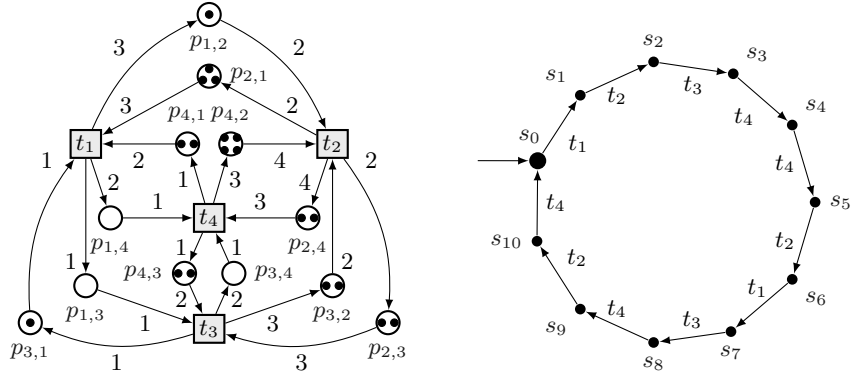
\begin{figure}[!ht]
\centering
\begin{tikzpicture}[scale=0.82]
\node[place,tokens=1](p12)at(2,2.1)[label=below:$p_{1,2}$]{};
\node[place,tokens=3](p21)at(2,1.1)[label=right:$p_{2,1}$]{};
\node[place](p13)at(0,-2.3)[label=below:$p_{1,3}$]{};
\node[place,tokens=1](p31)at(-0.9,-2.9)[label=below:$p_{3,1}$]{};
\node[place,tokens=2](p23)at(4.9,-2.9)[label=below:$p_{2,3}$]{};
\node[place,tokens=2](p32)at(4,-2.3)[label=below:$p_{3,2}$]{};
\node[place](p14)at(0.4,-1.2)[label=below:$p_{1,4}$]{};
\node[place,tokens=2](p41)at(1.65,0)[label=above:$p_{4,1}$]{};
\node[place,tokens=2](p24)at(3.6,-1.2)[label=below:$p_{2,4}$]{};
\node[place,tokens=4](p42)at(2.35,0)[label=above:$p_{4,2}$]{};
\node[place](p34)at(2.4,-2.1)[label=right:$p_{3,4}$]{};
\node[place,tokens=2](p43)at(1.6,-2.1)[label=left:$p_{4,3}$]{};
\node[transition](t1)at(0,0){$t_1$};
\node[transition](t2)at(4,0){$t_2$};
\node[transition](t3)at(2,-3){$t_3$};
\node[transition](t4)at(2,-1.2){$t_4$};
\draw(t1)[]edge[-latex,bend left=20]node[above left]{$3$}(p12);
\draw(p12)[]edge[-latex,bend left=20]node[above right]{$2$}(t2);
\draw(t2)[]edge[-latex]node[above,near start]{$2$}(p21);
\draw(p21)[]edge[-latex]node[above, near end]{$3$}(t1);
\draw(t1)[]edge[-latex]node[below left,near end]{$1$}(p13);
\draw(p13)[]edge[-latex]node[below]{$1$}(t3);
\draw(t3)[]edge[-latex]node[below]{$3$}(p32);
\draw(p32)[]edge[-latex]node[below right,near start]{$2$}(t2);
\draw(t3)[]edge[-latex,bend left=20]node[below]{$1$}(p31);
\draw(p31)[]edge[-latex,bend left=20]node[above left, very near end]{$1$}(t1);
\draw(t2)[]edge[-latex,bend left=20]node[above right, very near start]{$2$}(p23);
\draw(p23)[]edge[-latex,bend left=20]node[below]{$3$}(t3);
\draw(t3)[]edge[-latex,bend left=0]node[right]{$2$}(p34);
\draw(p34)[]edge[-latex,bend left=0]node[right]{$1$}(t4);
\draw(t4)[]edge[-latex,bend left=0]node[left]{$1$}(p43);
\draw(p43)[]edge[-latex,bend left=0]node[left]{$2$}(t3);
\draw(t4)[]edge[-latex,bend left=0]node[left]{$1$}(p41);
\draw(p41)[]edge[-latex,bend left=0]node[below]{$2$}(t1);
\draw(t4)[]edge[-latex,bend left=0]node[right]{$3$}(p42);
\draw(p42)[]edge[-latex,bend left=0]node[below]{$4$}(t2);
\draw(t1)[]edge[-latex,bend left=0]node[right]{$2$}(p14);
\draw(p14)[]edge[-latex,bend left=0]node[above]{$1$}(t4);
\draw(t2)[]edge[-latex,bend left=0]node[left]{$4$}(p24);
\draw(p24)[]edge[-latex,bend left=0]node[above]{$3$}(t4);
\end{tikzpicture}
\hspace*{6mm}
\begin{tikzpicture}[scale=1.3]
\begin{scope}[xshift=0cm,yshift=0cm]
\node[circle,fill=black!100,inner sep=0.08cm,label=above:$s_0~~$](00)at(163:1.45)[]{};
\node[circle,fill=black!100,inner sep=0.05cm,label=above left:$s_1$](01)at(131:1.45)[]{};
\node[circle,fill=black!100,inner sep=0.05cm,label=above:$s_2$](02)at(98:1.45)[]{};
\node[circle,fill=black!100,inner sep=0.05cm,label=above right:$s_3$](03)at(65:1.45)[]{};
\node[circle,fill=black!100,inner sep=0.05cm,label=above right:$s_4$](04)at(33:1.45)[]{};
\node[circle,fill=black!100,inner sep=0.05cm,label=right:$s_5$](05)at(0:1.45)[]{};
\node[circle,fill=black!100,inner sep=0.05cm,label=below right:$s_6$](06)at(327:1.45)[]{};
\node[circle,fill=black!100,inner sep=0.05cm,label=below:$s_7$](07)at(294:1.45)[]{};
\node[circle,fill=black!100,inner sep=0.05cm,label=below:$s_8$](08)at(262:1.45)[]{};
\node[circle,fill=black!100,inner sep=0.05cm,label=below left:$s_9$](09)at(229:1.45)[]{};
\node[circle,fill=black!100,inner sep=0.05cm,label=left:$s_{10}$](10)at(196:1.45)[]{};

\draw[-latex](-2,0.425)to node[]{}(00);
\draw[-latex](00)to node[auto,below right]{$t_1$}(01);
\draw[-latex](01)to node[auto,below]{$t_2$}(02);
\draw[-latex](02)to node[auto,below]{$t_3$}(03);
\draw[-latex](03)to node[auto,below left]{$t_4$}(04);
\draw[-latex](04)to node[auto,left]{$t_4$}(05);
\draw[-latex](05)to node[auto,left]{$t_2$}(06);
\draw[-latex](06)to node[auto,above]{$t_1~$}(07);
\draw[-latex](07)to node[auto,above]{$t_3$}(08);
\draw[-latex](08)to node[auto,above]{$t_4$}(09);
\draw[-latex](09)to node[auto,above right]{$t_2$}(10);
\draw[-latex](10)to node[auto,right]{$t_4$}(00);
\end{scope}
\end{tikzpicture}
\vspace*{-1mm}
\caption{Illustration of the proof of Theorem~\ref{CyclicWMGprime} for the prime T-vector $\uniqueP = (2,3,2,4)$.
On the left, a complete WMG $\mysyst$, with all possible binary circuits.
Its marking follows the black and grey places of Fig.~\ref{proofInductionCyclicWMG.fig}.
Pick any circuit P-subsystem $C$, e.g. the one induced by $\{p_{4,3}, p_{3,2}, p_{2,4}\}$:
it is conservative and fulfills the condition of Proposition~\ref{SuffCondCircuit}, hence it is live.
On the right, an \lts{} representing $RG(\mysyst)$.
The sequence $w=t_1 \, t_2 \, t_3 \, t_4 \, t_4 \, t_2 \, t_1 \, t_3 \, t_4 \, t_2 \, t_4$,
with $\Parikh(w) = \uniqueP$, is cyclically WMG-solvable.
}
\label{proofInductionCyclicWMGbis.fig}
\end{figure}

\vspace*{\baselineskip}
\noindent {\bf Polynomial-time complexity of Algorithm~\ref{algoWMGptime}.}
Let $m$ be the number of transitions (labels). 
The initial construction of the net $N$ considers a number of transition pairs equal to $m \cdot (m-1)$.
The computation of $\gcd(i,j)$ for any two integers $i,j$ can be done using the Euclidean algorithm in $\mathcal{O}(log_2^2(\max(i,j)))$,
which remains polynomial in the size of the input vector binary encoding.
Computing $\frac{u}{v}$ for two integers $u,v$, $v \neq 0$, can also be done in $\mathcal{O}(log_2^2(\max(u,v)))$.
Thus, constructing $N$ lies in $\mathcal{O}(m (m-1) 3 log_2^2(q))$ where $q$ is the highest value in $\uniqueP$.
Then, the algorithm marks all places in $\mathcal{O}(m (m-1))$, knowing that the $\gcd$ of each place is~$1$.
Hence the algorithm lies in PTIME: $\mathcal{O}(3 m (m-1) log_2^2(q) + m (m-1))$,
i.e.
$\mathcal{O}(m (m-1) (3 log_2^2(q) + 1))$ where $q$ is the highest value in $\uniqueP$.\\

\noindent {\bf Comparison with sequence-based synthesis.}
Algorithm~\ref{wmg.alg} uses $\mathcal{O}(n(m+n))$ steps, 
where 
$m$ is the number of labels and $n$ is the length of the input sequence $w$ to be solved cyclically.
Since $n$ equals the sum of the components of $\uniqueP = \Parikh(w)$, we get $q \le n$.
Also, $m \le n$; depending on the weights, $n$ can be exponentially larger than $m$.
Hence, $log_2^2(q) \in \mathcal{O}(log_2^2(n))$, so that $m (m-1) (3 log_2^2(q) + 1) \in \mathcal{O}(m^2 \cdot log_2^2(n))$. 
When $n$ is exponential in $m$, Algorithm~\ref{algoWMGptime} operates in time polynomial in $m$
while Algorithm~\ref{wmg.alg} operates in time exponential in $m$.

\section{Conclusions and Perspectives}\label{conclu.sec} 

In this work, we specialised previous methods of analysis and synthesis to the CF nets
and their WMG subclass, two useful subclasses of weighted Petri nets
allowing to model various real-world applications.

We highlighted the correspondance between CF- and WMG-solvability for binary alphabets. 
We also tackled the case of an \lts{} formed of a single cycle with an arbitrary number of letters,
for which we developed a characterisation of WMG-solvability together with a dedicated polynomial-time synthesis algorithm. 
We showed the equivalence between cyclic WMG- and CF-solvability in the case of three-letter alphabets,
and that it does not extend to four-letter alphabets.
We also discussed the applicability of our conditions to cyclic CF synthesis over arbitrary alphabets.

Finally, we introduced the notion of weak synthesis, allowing to be less restrictive on the solution design,
and provided a polynomial-time algorithm weakly synthesising a WMG with circular reachability graph. 
We showed this second algorithm to often operate much faster than the sequence-based one.

As a natural continuation of the work, we expect extensions of our results in two directions: 
generalising the class of goal-nets 
and relaxing the restrictions on the \lts{} under consideration.

%

\bibliographystyle{splncs}
\bibliography{TOPNOC-2019-2020}

\end{document}